\newtheorem{theorem}{Theorem}[section]
\newtheorem{proposition}[theorem]{Proposition}
\newtheorem{lemma}[theorem]{Lemma}
\newtheorem{remark}[theorem]{Remark}
\begin{document}

\title {Option Pricing for Symmetric L\'evy Returns with Applications}

\author{\hspace{0mm}$\begin{array}{l}\mbox{Kais Hamza, Fima C. Klebaner, Zinoviy Landsman$^{\dag}$ and Ying-Oon Tan} \vspace{2mm} \\
\mbox{\em School of Mathematical Sciences, Monash University,}\\
\mbox{\em Melbourne, Clayton Australia.} \vspace{2mm} \\
\mbox{\em $^\dag$Department of Statistics, University of Haifa,}\\
\mbox{\em Haifa, Israel.}\end{array}$}
\date{}
\maketitle

\begin{center} \bf Abstract \end{center}

This paper considers options pricing when the assumption of normality is replaced with that of the symmetry of the underlying distribution.
Such a market affords many equivalent martingale measures (EMM). However we argue (as in the discrete-time setting of \cite{KlebanerLandsman07})
that an EMM that keeps distributions within the same family is a ``natural'' choice.
We  obtain Black-Scholes type option pricing formulae for symmetric
Variance-Gamma and symmetric Normal Inverse Gaussian models.

 \vskip5mm

{\bf Keywords}: Symmetric distribution, L\'evy processes, equivalent
martingale measure, risk-neutral pricing, option pricing, Variance
Gamma process, Normal Inverse Gaussian process.

\vskip5mm

{\bf AMS 2000 subject classification}: 60G51, 60E99, 91B24

%%%%%%%%%%%%%%%%%%%%%%%%   Introduction   %%%%%%%%%%%%%%%%%%%%%%%%%%%

\section{Introduction}

In the classical Black-Scholes model  the stock price follows a
geometric Brownian motion and the return process is a Brownian motion
with drift.

In some cases empirical evidence shows that a more general
symmetric distribution is more appropriate for returns -- see for example
\cite{Mendelbrot63}, \cite{Mendelbrot67}, \cite{Fama65},
\cite{BlattbergGonedes74}, \cite{Hurlimann95}, \cite{Hurlimann01}.

In this work we replace the assumption of normality with that of symmetry,
while retaining all other assumptions such as independence and
stationarity of increments. This leads to returns that are symmetric
L\'evy processes; such stock prices are known as log-symmetric L\'evy processes.

We adopt a classical approach to the definition of symmetry; a random variable $Y$
has a symmetric distribution if for some $\mu$, the location parameter,
$(Y - \mu)$ and $-(Y - \mu)$ have the same distribution. In turn, a symmetric L\'evy
process is defined as having symmetric marginal distributions. This is easily shown
to reduce to assuming that the L\'evy measure is symmetric (about zero):
$\nu(A)=\nu(-A)$ for any Borel set $A\subset \mathbb{R}$.

Our definition of symmetry differs from that in \cite{FajardoMordecki06} where a L\'evy market
is said to be symmetric if a certain law before and
after the change of measure through Girsanov's theorem coincide.

The literature on option pricing with L\'evy processes is vast -- see for example,
\cite{Benhamou02}, \cite{CarrWu04}, \cite{Chan99}, \cite{FajardoMordecki06},
\cite{MadanCarrChang98}, \cite{Schoutens03}.
In particular, it is well known that L\'evy market models, barring the Brownian motion case, are
incomplete (\cite{Schoutens03}, p.77) and the choice of an EMM is not unique. In fact there are
infinitely many EMMs to choose from and any selection is arbitrary and motivated by various other
considerations.  Among the most popular methods are the Esscher transform (\cite{GerberShiu94},
\cite{KallsenShiryaev02}, \cite{Chan99}), minimum entropy martingale measure (\cite{Frittelli00},
\cite{FujiwaraMiyahara03}, \cite{Miyahara99}), minimal martingale measure (\cite{FollmerSchweizer91},
\cite{Chan99}), minimax and minimal distance martingale measure \cite{GollRuschendorf01},
variance-optimal martingale measure \cite{Schweizer96}, and mean-correcting martingale measure
(\cite{Schoutens03}, chapter 6).

In some cases the Esscher transform produces a continuum of EMMs that require further refinement on the selection by optimizing the relative entropy or some other utility function, \cite{KuchlerTappe08}. In reality, it is hard to tell which measure the market chooses, and this topic requires further research.

The choice of EMM is important not only for obtaining the price of an option but also for calculating
hedging parameters. By the change of num\'eraire formula, these parameters give
probabilities of option exercise  under different EMMs.

In the case of L\'evy processes with symmetric marginal distributions there is a unique EMM within the same family of distributions as the real world distribution.  If the process has a Brownian component, then the natural EMM is the same as in the classical case, obtained by changing the drift (location) parameter. If the process does not have a a Brownian component (and $\mu<r$ -- see Section \ref{OptionPricing}), then the natural EMM is obtained by changing the variance (scale) parameter.
In both cases, we obtain closed form option pricing formulae in which the normal distribution is replaced by other symmetric distributions.
This is reminiscent of the suggestion made by McDonald \cite{McDonald96} in 1996, with the difference that in that paper arbitrage is possible whereas our approach is arbitrage-free.

The search for a ``natural'' EMM under symmetry started in \cite{KlebanerLandsman07} in a discrete-time setting. In this paper we extend the
exploration to continuous-time models. The main contributions of the present work can be summarized as follows.
\begin{itemize}
\item The model for the stock price process is $S_t = S_0e^{Y_t}$, where $Y_t$ is a symmetric L\'evy process.
\item As a L\'evy process, $Y_t$ is specified by the characteristic triplet $(\mu, c, \nu)$. As a random variable, it is described
by the parameters $(\mu,\sigma^2,\psi)$ of the symmetric family of $Y_1$. We show how the two characterizations relate to each other.
\item We construct an equivalent measure $\mathbb{Q}$ under which
\begin{enumerate}
\item[(1)] the symmetric L\'evy process $Y_t$ remains a symmetric L\'evy process;
\item[(2)] the distribution of the L\'evy process $Y_t$ remains in the same symmetric family of distributions as the real world distribution;
\item[(3)] the discounted price process $e^{-rt}S_t$ is a martingale.
\end{enumerate}
We call such a change of measure a natural equivalent martingale measure.
\item  We derive option pricing formulae under the natural EMM.
\end{itemize}

A brief account of L\'evy processes and symmetric
 distributions necessary for our purposes are given in Section 2.
 In Section 3, we give the construction of a natural equivalent martingale measure for
 log-symmetric L\'evy processes.
In Section 4 we consider the option pricing with a natural EMM. Section
5 contains applications of this approach to symmetric Variance Gamma
and Normal Inverse Gaussian models.

%%%%%%%%%%%  Levy processes and Symmetric Distributions   %%%%%%%%%%%

\section{Preliminaries}
\subsection{L\'evy Processes with Symmetric Marginal Distributions}
A L\'evy process $(Y_t)_{t\geq 0}$ on $\mathbb{R}$ is a process with independent and
stationary increments. It is defined on a probability space
$(\Omega, \mathcal{F}, \mathbb{P})$ endowed with a complete filtration
$\{\mathcal{F}_t\}_{t \geq 0}$ to which $Y_t$ is adapted.
$Y_t$ has right-continuous with left limits sample paths, and $Y_t -
Y_s$ is independent of $\mathcal{F}_s$ and has the same distribution
as $Y_{t-s}$ for $0\leq s<t$. A L\'evy process is fully
determined by its initial value, $Y_0$, here assumed to be nil, and the
distribution of the increment  over one unit time interval, $Y_1$.
The distribution of   $Y_t$ is infinitely divisible for any $t$, and
its characteristic function   satisfies
    \begin{align}
    \label{InftyDevisible}
     \mathbb{E}(e^{iuY_t}) = \big(\mathbb{E}\left[e^{iuY_1}\right]\big)^t,\quad u \in \mathbb{R}.
    \end{align}
By the L\'{e}vy-Khintchine representation,
    \begin{align}
    \mathbb{E} \left[e^{iuY_1}\right] = e^{\Lambda(u)}, \label{eq:LevyKhintchine}
    \end{align}
with the characteristic exponent
    \begin{align}
    \Lambda(u) = i\mu u -\frac{1}{2}c^2u^2 + \int_{\mathbb{R}}\big(e^{iuy}-1-iuy1_{\{|y| \leq 1\}}\big) \nu(dy), \label{eq:LevyKhintchine2}
    \end{align}
where $\mu \in \mathbb{R}$, and $\nu$ is  a L\'evy measure satisfying
$\nu(\{0\})=0$ and $\int_{\mathbb{R}}(1 \wedge y^2)\nu(dy) <
\infty$. The triplet $(\mu, c, \nu)$ is referred to as the
characteristic triplet of $Y$.

We call a L\'evy process $(Y_t)_{t\geq 0}$ symmetric if, for each $t\geq0$, the random variable $Y_t$ is symmetric (about the location parameter $\mu t$):
$(Y_t-\mu t)$ and $(\mu t-Y_t)$ have the same distribution. By \eqref{eq:LevyKhintchine}, this is easily seen to be equivalent to the random variable $Y_1$ being symmetric (about $\mu$), and by \eqref{eq:LevyKhintchine2}, to the L\'evy measure $\nu$ being symmetric (about 0): for any Borel set $A\subset \mathbb{R}$,
$\nu(-A) = \nu(A)$, where $-A=\{x\in \mathbb{R}: -x\in A\}$. In this case, the characteristic exponent $\Lambda$ can be written as (e.g. \cite{Sato1999}, p.263)
\begin{equation}
\Lambda(u) = iu\mu -\frac{1}{2}c^2u^2 - 2\int_{0}^{\infty}(1 - \cos (uy)) \nu(dy). \label{eq:LevyKhintchineSymm}
\end{equation}

In what follows we assume that $Y_1$ has finite mean and variance (in fact, finite exponential moments). It is easy to see that, in this case,
the mean of $Y_1$ is precisely $\mu$
\begin{equation}\label{mean}
 \mathbb{E}[Y_1]=\mu,
\end{equation}
and the variance $\sigma^2$ is given by (e.g. \cite{ContTankov04}, proposition 3.13)
\begin{equation}
   \sigma^2= \mbox{Var}(Y_1) = c^2 + \int_{\mathbb{R}} y^2\nu(dy). \label{eq:VarianceofLevy}
    \end{equation}

On the other hand, the random variable $Y_1$ has a symmetric distribution with location $\mu$ and scale $\sigma$.
As such its characteristic function takes the form
\begin{equation}
\varphi_{Y_1}(u)=e^{iu\mu} \psi\left(\frac{\sigma^2}{2}u^2\right),\label{eq:CFSymmetric}
\end{equation}
where the function $\psi(u):[0,\infty)\rightarrow \mathbb{R}$ is
called the characteristic generator of the symmetric family (e.g.
\cite{FangOthers}, p.32). $\psi$ is  unique up to scaling, and if
chosen such that $\psi'(0)=-1$, yields that $\mu$ and $\sigma^2$ are
the mean and variance of $Y_1$ respectively. We denote by $S(\mu,\sigma^2,\psi)$
the distribution whose characteristic function is of the form \eqref{eq:CFSymmetric}.

A detailed account of the properties of symmetric distributions (also known as elliptical distributions)
is given in Fang et al. \cite{FangOthers}.

\subsection{Symmetric L\'evy Processes and Marginals}

The following proposition relates the characteristic triplet of a symmetric L\'evy process
$(Y_t)_{t\geq 0}$ to the parameters of the symmetric distribution of $Y_1$.
\begin{proposition}\label{tripleSym}
Let $(Y_t)_{t\geq 0}$ be a symmetric L\'evy process with characteristic triplet $(\mu,c,\nu)$.
Then $Y_1$ has distribution $S(\mu,\sigma^2,\psi)$ where $\sigma^2$ is given by \eqref{eq:VarianceofLevy},
and $\psi$ by
\begin{equation}
\psi(v) = \exp\left\{-\frac{c^2v}{\sigma^2} - 2\int_0^{\infty}\big(1 - \cos ( y\sqrt{2v}/\sigma) \big) \nu(dy) \right\},\label{eq:SymCharGen}
\end{equation}
and $v = \frac{\sigma^2 u^2}{2}$. Furthermore, $Y_t$ has distribution $S(\mu t,\sigma^2 t,\psi_t)$ with
\begin{equation}
\psi_t(v) = \big(\psi(v/t)\big)^t.\label{eq:gt}
\end{equation}
\end{proposition}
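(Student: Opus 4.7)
The plan is to match the two representations of $Y_1$'s characteristic function: the L\'evy--Khintchine form specialized to the symmetric case in \eqref{eq:LevyKhintchineSymm}, and the elliptical/symmetric form \eqref{eq:CFSymmetric} involving the generator $\psi$. The second part then follows immediately from the infinite divisibility identity \eqref{InftyDevisible}.

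Concretely, I would first combine \eqref{eq:LevyKhintchine} and \eqref{eq:LevyKhintchineSymm} to write
\[
\varphi_{Y_1}(u) = e^{iu\mu}\exp\!\left\{-\tfrac{1}{2}c^2 u^2 - 2\int_0^\infty\!\bigl(1-\cos(uy)\bigr)\,\nu(dy)\right\}.
\]
Comparing with \eqref{eq:CFSymmetric} forces the bracketed exponential to equal $\psi(\sigma^2 u^2/2)$. The change of variable $v = \sigma^2 u^2/2$, i.e.\ $u = \sqrt{2v}/\sigma$, then yields \eqref{eq:SymCharGen} directly. Note that this uses only $u^2$, so the choice of positive square root is harmless; the integrand is an even function of $u$ by symmetry of $\nu$, which is precisely what makes the generator well defined as a function of $v\geq 0$.

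The only non-bookkeeping step is confirming the normalization $\psi'(0)=-1$, so that $\mu$ and $\sigma^2$ are genuinely the mean and variance of $Y_1$ as claimed. I would do this by Taylor-expanding $\psi$ at $v=0$ via the parameterization $v=\sigma^2u^2/2$: since $1-\cos(uy)=\tfrac{1}{2}u^2y^2+o(u^2)$ as $u\to 0$, and $\nu$ is symmetric, one gets
\[
\psi\!\left(\tfrac{\sigma^2 u^2}{2}\right) = \exp\!\left\{-\tfrac{u^2}{2}\Bigl(c^2 + \int_{\mathbb{R}} y^2\,\nu(dy)\Bigr) + o(u^2)\right\} = \exp(-v) + o(v),
\]
using \eqref{eq:VarianceofLevy}. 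Hence $\psi(v)=1-v+o(v)$, which gives $\psi'(0)=-1$ and confirms consistency with \eqref{mean} and \eqref{eq:VarianceofLevy}.

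For the second assertion, \eqref{InftyDevisible} gives $\varphi_{Y_t}(u)=\bigl(\varphi_{Y_1}(u)\bigr)^t = e^{it\mu u}\bigl(\psi(\sigma^2 u^2/2)\bigr)^t$. To put this in the form \eqref{eq:CFSymmetric} with scale $\sigma^2 t$, write $\sigma^2 u^2/2 = (\sigma^2 t) u^2/(2t)$, so that the argument of $\psi$ is $v/t$ where $v=(\sigma^2 t)u^2/2$ is the natural variable for the scale $\sigma^2 t$. This forces $\psi_t(v)=\bigl(\psi(v/t)\bigr)^t$, which is exactly \eqref{eq:gt}. The mild obstacle here is just keeping the two parameterizations straight; everything else is forced. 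I do not expect any serious analytic difficulty since finite variance has been assumed throughout and the dominated convergence needed for the Taylor expansion step is supplied by $\int(1\wedge y^2)\,\nu(dy)<\infty$ together with $\int y^2\,\nu(dy)<\infty$.
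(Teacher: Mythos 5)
Your proposal is correct and follows essentially the same route as the paper: the paper's proof is simply a ``straightforward examination of the characteristic function,'' matching \eqref{eq:LevyKhintchineSymm} against \eqref{eq:CFSymmetric} via the substitution $v=\sigma^2u^2/2$, and deriving \eqref{eq:gt} from $\varphi_{Y_t}(u)=(\varphi_{Y_1}(u))^t=e^{iu\mu t}\bigl(\psi(\sigma^2tu^2/(2t))\bigr)^t$, exactly as you do. Your additional verification that $\psi'(0)=-1$ (so that $\mu$ and $\sigma^2$ are genuinely the mean and variance) is a worthwhile detail the paper leaves implicit, but it does not change the method.
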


 \begin{proof}
The proof is a straightforward examination of the characteristic function.  The form of $\psi_t$ is
due to
$$\mathbb{E}\left[e^{iuY_t}\right] =  \big(\mathbb{E}\left[e^{iuY_1}\right]\big)^t=\big(\varphi_{Y_1}(u)\big)^t=e^{iu\mu t}
    \left(\psi\left(\frac{\sigma^2t}{2t}u^2\right)\right)^t.$$
\end{proof}

    \subsection{Equivalent Change of Measure for L\'evy Processes}

In general, a L\'evy process under an equivalent measure need not
remain L\'evy, as independence of increments may not be preserved.
However, there is a class of equivalent measures under
which it does.

\begin{theorem}\label{ChangeMeasure}
Let $Y_t$ be a L\'evy process on $\mathbb{R}$ with characteristic triplet $(\mu, c, \nu)$ under $\mathbb{P}$. Let
$\eta \in \mathbb{R}$ and a function $\phi$ be arbitrary such that
    \[
    \int_{\mathbb{R}}\big(e^{\phi(y)/2} - 1\big)^2 \nu(dy) < \infty.
    \]
Then
\begin{enumerate}
\item The limit \[
        \lim_{\epsilon \downarrow 0}\left(\sum_{s \leq t,~|\Delta Y_s| > \epsilon} \phi(\Delta Y_s) - t\int_{|y|>\epsilon}\big(e^{\phi(y)} - 1\big)\nu(dy)\right)
    \]
exists (uniformly in $t$ on any bounded interval).
\item The process
\begin{eqnarray*}
D_t & = & \eta Y_t^c - \frac{\eta^2c^2t}{2} - \eta \mu t\\
& & + \lim_{\epsilon \downarrow 0}\left(\sum_{s \leq t,~|\Delta Y_s| > \epsilon} \phi(\Delta Y_s) - t\int_{|y|>\epsilon}\big(e^{\phi(y)} - 1\big)\nu(dy)\right),
\end{eqnarray*}
where $Y_t^c$ is the continuous part of $Y_t$,
defines a probability measure $\mathbb{Q}$ equivalent to $\mathbb{P}$ by
\begin{equation}\label{equivalentmeasure}
    \frac{d\mathbb{Q}}{d\mathbb{P}}\bigg|_{\mathcal{F}_t} = e^{D_t}.
\end{equation}

\item The process $Y_t$ remains a L\'evy process
under $\mathbb{Q}$ with characteristic triplet $(\tilde{\mu}, c,\tilde{\nu})$, where
    \[
    \tilde{\mu} = \mu + \int_{-1}^{1} y(\tilde{\nu} - \nu)(dy) +
    c^2\eta
    \]
    and
    \[
    \tilde{\nu}(dy) = e^{\phi(y)}\nu(dy).
    \]

\item Conversely, any probability measure equivalent to $\mathbb{P}$ under which $Y_t$ remains a L\'evy process
must be of the form \eqref{equivalentmeasure} and the characteristic triplet must be $(\tilde{\mu}, c,\tilde{\nu})$
as specified above.
\end{enumerate}
\end{theorem}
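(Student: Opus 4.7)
The plan is to exploit the L\'evy--It\^o decomposition $Y_t=\mu t+cW_t+J_t$, in which the Gaussian continuous part $Y_t^c=\mu t+cW_t$ and the pure-jump part $J_t$ are independent. Under this decomposition, $D_t$ splits naturally as $D_t^G+D_t^J$, where $D_t^G=\eta cW_t-\tfrac12\eta^2c^2t$ is the classical Cameron--Martin--Girsanov exponent for the Brownian component and $D_t^J$ is the jump-exponent limit appearing in part~(1). I would then handle these two pieces separately and recombine at the end using independence.

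For part~(1), define the truncated pre-limit
\begin{equation*}
W_t^{\epsilon}=\sum_{s\leq t,\,|\Delta Y_s|>\epsilon}\phi(\Delta Y_s)-t\!\int_{|y|>\epsilon}(e^{\phi(y)}-1)\,\nu(dy),
\end{equation*}
and note that $M_t^{\epsilon}:=\exp(W_t^{\epsilon})$ is a true $\mathbb{P}$-martingale with $\mathbb{E}[M_t^{\epsilon}]=1$ (the exponential formula for a compound Poisson process restricted to jumps of absolute size greater than $\epsilon$). A direct computation yields
\begin{equation*}
\mathbb{E}\big[\sqrt{M_t^{\epsilon}}\,\big]=\exp\Big\{-\tfrac{t}{2}\!\int_{|y|>\epsilon}(e^{\phi(y)/2}-1)^2\,\nu(dy)\Big\},
\end{equation*}
so the Hellinger-type hypothesis $\int(e^{\phi(y)/2}-1)^2\,\nu(dy)<\infty$ is exactly what is needed to make $\{\sqrt{M_t^{\epsilon}}\}_{\epsilon>0}$ Cauchy in $L^2$ (and hence $\{M_t^{\epsilon}\}_{\epsilon>0}$ Cauchy in $L^1$) with a strictly positive limit $M_t$. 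A Doob-type maximal argument upgrades this to uniform-in-$t$ convergence on compact intervals, and passing to $W_t^{\epsilon}=\log M_t^{\epsilon}$ yields the claimed limit.

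For parts~(2) and~(3), independence of the Brownian and jump parts gives $\mathbb{E}[e^{D_t}]=\mathbb{E}[e^{D_t^G}]\,\mathbb{E}[e^{D_t^J}]=1$, so \eqref{equivalentmeasure} defines a probability measure $\mathbb{Q}\sim\mathbb{P}$. To read off the $\mathbb{Q}$-triplet I compute $\mathbb{E}_{\mathbb{Q}}[e^{iuY_t}]=\mathbb{E}_{\mathbb{P}}[e^{iuY_t}e^{D_t}]$ as a product of Gaussian and jump factors. Completing the square in the Gaussian factor yields $\exp\{iu(\mu+c^2\eta)t-\tfrac12c^2u^2t\}$, so the continuous-part drift shifts to $\mu+c^2\eta$. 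The jump factor produces $\exp\{t\!\int(e^{iuy}-1)e^{\phi(y)}\,\nu(dy)-t\!\int(e^{\phi(y)}-1)\,\nu(dy)\}$ by the same exponential-compensator identity; casting this in L\'evy--Khintchine form with truncation $|y|\leq 1$ introduces the additional drift correction $\int_{|y|\leq 1}y(e^{\phi(y)}-1)\,\nu(dy)=\int_{-1}^{1}y\,(\tilde\nu-\nu)(dy)$. Collecting terms gives the triplet $(\tilde\mu,c,\tilde\nu)$ with $\tilde\nu(dy)=e^{\phi(y)}\nu(dy)$ and $\tilde\mu$ as stated.

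For the converse~(4), I would invoke the Jacod--Shiryaev characterization of mutually absolutely continuous laws of L\'evy processes (see Sato~\cite{Sato1999}, Theorems~33.1--33.2): two equivalent L\'evy laws on $\mathcal{F}_t$ must share the same Gaussian coefficient $c$, their L\'evy measures must be mutually absolutely continuous with density $e^{\phi}$ satisfying the Hellinger integrability condition, and the Radon--Nikodym density is necessarily of the Girsanov--plus--Dol\'eans--Dade form \eqref{equivalentmeasure}. The main obstacle in the proof is part~(1): setting up the Hellinger $L^2$ estimate and upgrading it to uniform-in-$t$ convergence on bounded intervals requires careful bookkeeping with the exponential-martingale identities. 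Once~(1) is established, parts~(2) and~(3) reduce to characteristic-function manipulations and~(4) is essentially a citation.
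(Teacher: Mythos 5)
Your proposal is correct in substance, but it takes a genuinely different route from the paper: the paper's entire proof is a citation, deducing statements 1--2 from Lemma 33.6 of Sato \cite{Sato1999} and statements 3--4 from his Theorems 33.1 and 33.2, whereas you essentially reconstruct the proofs of those results from scratch. Your reconstruction is the standard one and it is sound: the L\'evy--It\^o splitting of $D_t$ into a Cameron--Martin--Girsanov factor and a jump factor, the exponential formula making $M_t^{\epsilon}=e^{W_t^{\epsilon}}$ a mean-one martingale, the Hellinger identity $\mathbb{E}\big[\sqrt{M_t^{\epsilon}}\big]=\exp\{-\tfrac{t}{2}\int_{|y|>\epsilon}(e^{\phi(y)/2}-1)^2\nu(dy)\}$ showing exactly why the stated integrability hypothesis is the right one, and the characteristic-function computation identifying $(\tilde\mu,c,\tilde\nu)$. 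What your approach buys is transparency --- the reader sees where the Hellinger condition comes from and why the drift correction $\int_{-1}^{1}y(\tilde\nu-\nu)(dy)$ appears from re-truncating at $|y|\leq 1$ --- at the cost of length and of two points that you assert rather than establish: the strict positivity of the $L^1$-limit $M_t$ and the uniform-in-$t$ upgrade of the convergence, which are precisely the delicate content of Sato's Lemma 33.6 (and your converse in part 4 still rests on the same citation the paper uses). One small algebraic slip to fix: the jump factor of $\mathbb{E}_{\mathbb{P}}[e^{iuY_t}e^{D_t}]$ should come out as $\exp\{t\int(e^{iuy}-1)e^{\phi(y)}\nu(dy)\}$, since the compensator $-t\int(e^{\phi(y)}-1)\nu(dy)$ inside $D_t$ exactly cancels against the exponential formula applied to $\sum(iu\Delta Y_s+\phi(\Delta Y_s))$; as written your expression subtracts that compensator a second time. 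The cancellation does not affect your final triplet, which is correct, but the intermediate formula should be repaired.
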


\begin{proof}
This theorem is a direct consequence of Lemma 33.6 and Theorems 33.1 and 33.2 of Sato \cite{Sato1999}.
Statements 1. and 2. follow from Lemma 33.6. The L\'evy property of the process $Y_t$ under $\mathbb{Q}$ is a consequence of Theorem 33.2.
The form of its characteristic triplet is given by Theorem 33.1. Statement 4. is given by Theorems 33.1 and 33.2.
\end{proof}

%\begin{theorem}
%\label{ChangeMeasure}
%Let $Y_t$ be a L\'evy process on $\mathbb{R}$ with characteristic triplet $(\mu, c, \nu)$ under $P$. Let $\eta \in \mathbb{R}$, and $\phi:\mathbb{R} \rightarrow \mathbb{R}$ satisfy
%    \[
%    \int_{\mathbb{R}} (e^{\phi(y)/2} - 1)^2 \nu(dy) < \infty.
%    \]
%Define a measure $Q$ by
%    \[
%    \frac{dQ}{dP}\bigg|_{\mathcal{F}_t} = e^{D_t},
%    \]
%with
%    \[
%        D_t = \eta Y_t^c - \frac{\eta^2c^2t}{2} - \eta \mu t + \lim_{\epsilon \downarrow 0}\left(\sum_{s \leq t,~|\Delta Y_s| > \epsilon} \phi(\Delta Y_s) - t\int_{|y|>\epsilon}\big(e^{\phi(y)} - 1\big)\nu(dy)\right),
%    \]
%where $Y_t^c$ is the continuous part of $Y_t$. Then $Q$ is
%equivalent to $P$ and the process $Y_t$ remains a L\'evy process
%under $Q$ with characteristic triplet $(\tilde{\mu}, c,
%\tilde{\nu})$
%    \[
%    \tilde{\nu}(dy) = e^{\phi(y)}\nu(dy),
%    \]
%    \[
%    \tilde{\mu} = \mu + \int_{-1}^{1} y(\tilde{\nu} - \nu)(dy) +
%    c^2\eta.
%    \]
%\end{theorem}

Next we consider the case of a symmetric L\'evy process $Y_t$. In order that it
remains symmetric under an equivalent measure $\mathbb{Q}$, we show that it is
necessary and sufficient that the function $\phi$ given in the above theorem be even.
We also describe how the parameters of the symmetric family transform under the change of measure.

\begin{theorem}
\label{PhiCondition}
Let $Y_t$ be a L\'evy process on $\mathbb{R}$ with characteristic triplet $(\mu, c, \nu)$ under $\mathbb{P}$,
and $\mathbb{Q}$ be any ``L\'evy-preserving'' equivalent change of measure as described in Theorem \ref{ChangeMeasure}.
Then $Y_t$ is symmetric, or equivalently the L\'evy measure $\tilde{\nu}$ is symmetric, if and only if   $\phi(-y) =
\phi(y)$ $\nu$-a.e., and in this case, the $\mathbb{Q}$-distribution of $Y_1$ is $S(\tilde{\mu},
\tilde{\sigma}^2, \tilde{\psi})$ where
    \begin{equation}
        \tilde{\mu} = \mu + c^2\eta,
    \end{equation}

    \begin{equation}
        \tilde{\sigma}^2 = c^2 + \int_{\mathbb{R}} y^2e^{\phi(y)}\nu(dy), \label{eq:QSymVariance}
    \end{equation}

    \begin{equation}
        \tilde{\psi} (v) = \exp\left\{-\frac{c^2v}{\tilde{\sigma}^2} - 2\int_{0}^{\infty}\big(1 - \cos (y\sqrt{2v}/\tilde{\sigma})\big)e^{\phi(y)}\nu(dy)\right\}. \label{eq:QSymCharacteristic}
    \end{equation}
\end{theorem}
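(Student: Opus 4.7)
My plan is to reduce each of the three assertions to a direct consequence of Theorem \ref{ChangeMeasure} combined with the characterizations of symmetry recorded in Section 2. By Theorem \ref{ChangeMeasure}, under $\mathbb{Q}$ the process $Y_t$ is L\'evy with triplet $(\tilde{\mu}, c, \tilde{\nu})$, where $\tilde{\nu}(dy) = e^{\phi(y)}\nu(dy)$. The equivalence between ``$Y_t$ is symmetric under $\mathbb{Q}$'' and ``$\tilde{\nu}$ is symmetric'' is merely the marginal-versus-L\'evy-measure characterization of symmetry stated in Section 2.1, applied to the $\mathbb{Q}$-triplet; so this half of the statement requires no new work.

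The substantive step is to characterise symmetry of $\tilde{\nu}$ in terms of $\phi$. Since $Y_t$ is symmetric under $\mathbb{P}$, the measure $\nu$ itself is symmetric; hence for any Borel set $A \subset \mathbb{R}$, the substitution $y \mapsto -y$ gives
\[
\tilde{\nu}(-A) = \int_{-A} e^{\phi(y)}\,\nu(dy) = \int_A e^{\phi(-y)}\,\nu(-dy) = \int_A e^{\phi(-y)}\,\nu(dy).
\]
Thus $\tilde{\nu}(-A) = \tilde{\nu}(A)$ for every Borel $A$ if and only if the densities $y \mapsto e^{\phi(y)}$ and $y \mapsto e^{\phi(-y)}$ agree $\nu$-almost everywhere, which is precisely $\phi(-y) = \phi(y)$ $\nu$-a.e.

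Granted this, the three formulas follow mechanically. Because $\nu$ and $\tilde{\nu}$ are both symmetric, the odd function $y$ integrates to zero against $\tilde{\nu} - \nu$ on $[-1,1]$, so the expression for $\tilde{\mu}$ in Theorem \ref{ChangeMeasure} collapses to $\tilde{\mu} = \mu + c^2\eta$. Equation \eqref{eq:QSymVariance} is \eqref{eq:VarianceofLevy} applied to the new triplet, and equation \eqref{eq:QSymCharacteristic} is Proposition \ref{tripleSym} applied to the $\mathbb{Q}$-distribution of $Y_1$, whose characteristic triplet is $(\tilde{\mu}, c, \tilde{\nu})$ and whose variance is $\tilde{\sigma}^2$.

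The only point requiring any care is the $\nu$-a.e.\ qualifier in the second step: one must pass from an equality of the two measures on all Borel sets to an almost-everywhere equality of their densities with respect to $\nu$, which is a standard uniqueness statement for Radon--Nikodym derivatives. No further ingredient is needed.
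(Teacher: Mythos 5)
Your proposal is correct and follows essentially the same route as the paper: the paper likewise reads the $\mathbb{Q}$-triplet off Theorem \ref{ChangeMeasure}, declares the equivalence between evenness of $\phi$ and symmetry of $\tilde{\nu}$ (which you spell out via the change of variables $y\mapsto -y$ and Radon--Nikodym uniqueness, where the paper simply calls it clear), kills the integral $\int_{-1}^{1}y(\tilde{\nu}-\nu)(dy)$ by symmetry to get $\tilde{\mu}=\mu+c^2\eta$, and obtains $\tilde{\sigma}^2$ and $\tilde{\psi}$ from \eqref{eq:VarianceofLevy} and \eqref{eq:SymCharGen}. Your version is just a more explicit rendering of the same argument, correctly flagging the implicit hypothesis that $\nu$ is symmetric under $\mathbb{P}$.
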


\begin{proof}
Since $\tilde{\nu}(dy)=e^{\phi(y)}\nu(dy)$, the evenness of $\phi$ is clearly equivalent to the symmetry of $\tilde{\nu}$.

Furthermore, if $\tilde{\nu}$ and $\nu$ are both symmetric, then $\int_{-1}^{1} y(\tilde{\nu} - \nu)(dy) = 0$.
Hence, $\tilde{\mu} = \mu + c^2\eta$. The other two parameters are obtained
from \eqref{eq:VarianceofLevy} and \eqref{eq:SymCharGen}.
\end{proof}

\section{The Natural Change of Measure}

Consider a symmetric L\'evy process $Y_t$ with $\mathbb{P}$-characteristic triplet
$(\mu, c, \nu)$, and $\mathbb{P}$-distribution of $Y_1$,  $S(\mu,\sigma^2, \psi).$

We call an equivalent measure $\mathbb{Q}$ natural for $Y_t$ if $Y_t$ is L\'evy under $\mathbb{Q}$ and
the $\mathbb{Q}$-distribution of $Y_1$ belongs to a family of
symmetric distributions with same characteristic generator, $\tilde{\psi} = \psi$.

When searching for a natural change of measure, an interesting fact emerges; there is, up to a constant,
a unique natural equivalent measure for each L\'evy process. Furthermore, the specific change of measure
takes a dichotomous form depending on whether or not a Brownian component is present.

The next two theorems detail these facts. We start with the uniqueness result, then show existence.

\begin{theorem}
\label{Natural} Let $Y_t$ be a symmetric L\'evy process with
$\mathbb{P}-$characteristic triplet $(\mu, c, \nu)$, and $\mathbb{P}$-distribution of
$Y_1$ $S(\mu, \sigma^2, \psi)$. Suppose $\mathbb{Q}$ is a natural change of
measure.
\begin{enumerate}
\item[(1)] If $c\neq0 $ (a Brownian component is present), then under $\mathbb{Q}$, the
characteristic triplet  becomes $(\tilde{\mu}, c, \nu)$, where
$\tilde{\mu} = \mu + c^2\eta,$ for some $\eta$. In this case, $c$ and $\nu$ remain unchanged.
\item[(2)] If $c = 0 $ (no Brownian component is present), then under $\mathbb{Q}$ the characteristic triplet becomes $(\mu, 0, \tilde{\nu})$ where
$\tilde{\nu}(A) = \int 1_{\{A\}}(\beta y) \nu(dy)$, for some $\beta>0$. In this case, $\mu$ and $c$ remain unchanged.
\end{enumerate}
\end{theorem}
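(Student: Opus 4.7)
The plan is to exploit the explicit formulas for the characteristic generators before and after the change of measure, together with the uniqueness of the L\'evy-Khintchine representation. Since $\mathbb{Q}$ is natural, Theorem \ref{PhiCondition} applies, so $\phi$ is even and $\tilde{\psi}$ has the form \eqref{eq:QSymCharacteristic}. Imposing $\tilde{\psi}(v)=\psi(v)$ at $v=\tilde{\sigma}^2 u^2/2$, using \eqref{eq:SymCharGen} to rewrite the right-hand side, and performing the substitution $y\mapsto \alpha y$ (equivalently, pushing $\nu$ forward by the dilation $y\mapsto\alpha y$ to obtain the measure $\nu_\alpha$) with $\alpha=\tilde{\sigma}/\sigma>0$, I arrive at the identity, valid for all $u\in\mathbb{R}$,
\[
-\frac{c^2 u^2}{2}-2\int_0^\infty\bigl(1-\cos(uy)\bigr)e^{\phi(y)}\nu(dy) \;=\; -\frac{c^2\alpha^2 u^2}{2}-2\int_0^\infty\bigl(1-\cos(uy)\bigr)\nu_\alpha(dy).
\]

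Both sides are driftless, symmetric L\'evy-Khintchine exponents of the form \eqref{eq:LevyKhintchineSymm}. By the uniqueness of the characteristic triplet, the Gaussian coefficients must coincide and so must the L\'evy measures, which yields the two separate conditions
\[
c^2=c^2\alpha^2 \qquad\text{and}\qquad e^{\phi(y)}\nu(dy)=\nu_\alpha(dy).
\]

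In case (1), where $c\neq 0$, the first identity forces $\alpha^2=1$, hence $\alpha=1$ (since $\alpha>0$), so $\tilde{\sigma}=\sigma$. Then $\nu_\alpha=\nu$, whence $e^{\phi}\equiv 1$ $\nu$-a.e.\ and $\tilde{\nu}=\nu$; combined with $\tilde{\mu}=\mu+c^2\eta$ from Theorem \ref{PhiCondition}, only the location parameter changes, as claimed. In case (2), where $c=0$, the first identity is automatic while the second reads $\tilde{\nu}(dy)=\nu_\alpha(dy)$, which unwinds to $\tilde{\nu}(A)=\int 1_A(\alpha y)\,\nu(dy)$; setting $\beta=\alpha$ yields the stated form, and Theorem \ref{PhiCondition} with $c=0$ forces $\tilde{\mu}=\mu$. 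The main subtlety is the appeal to L\'evy-Khintchine uniqueness for the symmetric representation \eqref{eq:LevyKhintchineSymm}: one needs to verify that $e^{\phi}\nu$ and $\nu_\alpha$ are bona fide L\'evy measures, which follows from the integrability hypothesis carried over from Theorem \ref{ChangeMeasure} and from the finite-variance assumption made on $Y_1$.
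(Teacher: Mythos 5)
Your argument is correct, and it reaches the same pivotal identity as the paper --- equality, for all $u$, of two driftless symmetric exponents of the form \eqref{eq:LevyKhintchineSymm} with Gaussian coefficients $c^2$ and $c^2\alpha^2$ and jump measures $e^{\phi}\nu$ and $\nu_\alpha$ --- but you resolve that identity by a genuinely different device. The paper does not invoke the uniqueness of the L\'evy--Khintchine triplet; instead it first isolates the Gaussian coefficient by dividing by $v$ and letting $v\to\infty$ (Lemma \ref{DominatedConvergence}, a dominated-convergence argument using $1-\cos x\le x^2/2$ and the finite second moments), obtaining $c^2(\sigma^{-2}-\tilde{\sigma}^{-2})=0$, and then identifies the two L\'evy measures from the remaining integral identity by a self-contained analytic lemma (Lemma \ref{TransformIdentity}): a Laplace transform in $\omega$ converts $\int_0^\infty(1-\cos(\omega y))\,d\nu$ into $\int_0^\infty \frac{y^2}{\lambda^2+y^2}\,d\nu$, and equality of Mellin transforms of the associated probability measures $y^2\,d\nu/\kappa$ then forces the measures to coincide. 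Your single appeal to the uniqueness of the characteristic triplet (Sato, Theorem 8.1) delivers both conclusions at once and is shorter; its only cost is that you must check that both exponents really are L\'evy--Khintchine exponents of bona fide infinitely divisible laws (i.e.\ that $e^{\phi}\nu$ and $\nu_\alpha$, symmetrized to $\mathbb{R}\setminus\{0\}$, are L\'evy measures, and that the exponents coincide as functions, not merely their exponentials --- immediate here since both are real). You flag exactly this point, so there is no gap; the paper's route buys a self-contained, elementary proof of the injectivity statement it needs, while yours buys brevity by leaning on the standard uniqueness theorem.
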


\begin{proof}
Since we only consider equivalent measures $\mathbb{Q}$ that preserve L\'evy
property, we denote the characteristic triplet under $\mathbb{Q}$ with
$(\tilde \mu,c,\tilde\nu)$. By Theorem \ref{ChangeMeasure} we know that
\begin{equation}\label{existsphi}
\tilde{\nu}(dy) = e^{\phi(y)}\nu(dy)
\end{equation}
for some $\phi$.

The proof of the Theorem uses Theorem \ref{ChangeMeasure} and an analytical lemma.

Using expressions \eqref{eq:SymCharGen} and \eqref{eq:QSymCharacteristic},
we can see that $\mathbb{Q}$ is natural (ie $\psi=\tilde \psi$) if and only if for
all $v>0$, the function $\phi$ in \eqref{existsphi} is even and
satisfies the following integral equation
\begin{equation}
\int_0^{\infty} \Big[ \big(1 - \cos(y\sqrt{2v}/\tilde{\sigma})\big)e^{\phi(y)} - \big(1 - \cos(y\sqrt{2v}/\sigma)\big) \Big]
\nu(dy) + \frac{c^2v}{2}\Big(\frac{1}{\tilde{\sigma}^2} - \frac{1}{\sigma^2}\Big) = 0. \label{eq:ConditionG}
\end{equation}
In Lemma \ref{DominatedConvergence} we show that
\begin{equation}\label{lemma1}
\lim_{v\to\infty}\int_0^{\infty} \frac{1}{v}\Big[ \big(1 - \cos(y\sqrt{2v}/\tilde{\sigma})\big)e^{\phi(y)} -
\big(1 - \cos(y\sqrt{2v}/\sigma)\big) \Big] \nu(dy)=0.
\end{equation}

Hence by dividing by $v$ and taking limit in \eqref{eq:ConditionG},
we obtain that
\begin{equation}\label{equalpsi}
\frac{c^2}{2}\Big(\frac{1}{\sigma^2} - \frac{1}{\tilde{\sigma}^2}\Big) = 0,
\end{equation}
which reduces \eqref{eq:ConditionG} to
\begin{equation}
\int_0^{\infty} \Big[ \big(1 - \cos(y\sqrt{2v}/\tilde{\sigma})\big)e^{\phi(y)} - \big(1 - \cos(y\sqrt{2v}/\sigma)\big) \Big]\nu(dy) = 0. \label{eq:ConditionGsimple}
\end{equation}

(1) Consider first the case $c\not=0$.  It follows  from
\eqref{equalpsi} that $\sigma^2=\tilde{\sigma}^2$, and re-parameterizing \eqref{eq:ConditionGsimple} using $\omega=\frac{\sqrt{2v}}{\sigma}=\frac{\sqrt{2v}}{\tilde{\sigma}}$
we get that
$$\int_0^{\infty} \big(1 - \cos(\omega y)\big) \tilde{\nu}(dy) =
\int_0^{\infty} \big(1 - \cos(\omega y)\big) \nu(dy), \hspace{5mm} \forall \omega > 0.$$
It is now, at least intuitively, clear that this implies that $\tilde{\nu} = \nu$.
This is however not straightforward and requires a detailed proof. As it is a technical matter, it is given in the Appendix in Lemma \ref{TransformIdentity}.
Note that in this case, and since $\tilde{\sigma}=\sigma$,
$$\int_0^\infty y^2\tilde{\nu}(dy) = \int_0^\infty y^2\nu(dy).$$

(2) Consider now the case $c=0$. Clearly $\tilde{\mu} = \mu + c^2\eta = \mu$.
Also, with $\beta=\tilde{\sigma}/\sigma$, $\lambda=\sqrt{2v}/\tilde{\sigma}$ and $\nu_{\beta}(dy)=\nu(\frac{1}{\beta}dy)$, \eqref{eq:ConditionGsimple}
becomes
$$\int_0^{\infty} \big(1 - \cos(\lambda y )\big)\tilde{\nu}(dy) =\int_0^{\infty} \big(1 - \cos(\lambda y)\big) \nu_\beta(dy),\;\;\forall \lambda>0.$$
Again, using Lemma \ref{TransformIdentity} we get that $\tilde{\nu} = \nu_{\beta}$ a.e.
\end{proof}

\begin{theorem}\label{suffcond}Let $Y_t$ be a symmetric L\'evy process with
$\mathbb{P}$-characteristic triplet $(\mu,c,\nu)$, and $\mathbb{P}$-distribution of
$Y_1$ $S(\mu,\sigma^2,\psi)$.
\begin{enumerate}
\item[(1)] If $c\neq0$ (a Brownian component is present), then for any $\eta$ there is a
natural change of measure $\mathbb{Q}$, such that the
characteristic triplet becomes $(\tilde{\mu},c,\nu)$, where
$\tilde{\mu} = \mu + c^2\eta$.
In this case, the $\mathbb{Q}$-distribution of $Y_1$ is $ S(\tilde{\mu},\sigma^2,\psi)$.
\item[(2)] If $c=0 $ (no Brownian component is present), then for any $\beta>0$ there is a
natural change of measure $\mathbb{Q}$, such that the characteristic triplet becomes $(\mu,0,\tilde{\nu})$, where
$\tilde{\nu}(A) = \int 1_{\{A\}}(\beta y)\nu(dy)$.
In this case, the $\mathbb{Q}$-distribution of $Y_1$ is $S(\mu,\tilde{\sigma}^2,\psi)$, where $\tilde{\sigma}=\beta\sigma$.

\end{enumerate}
\end{theorem}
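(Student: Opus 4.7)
The strategy is constructive: in each case I will exhibit an explicit pair $(\eta,\phi)$, plug it into Theorem \ref{ChangeMeasure} to obtain an equivalent measure $\mathbb{Q}$ under which $Y_t$ is a L\'evy process, and then use Theorem \ref{PhiCondition} to read off the symmetric family parameters $(\tilde{\mu},\tilde{\sigma}^2,\tilde{\psi})$ of the $\mathbb{Q}$-distribution of $Y_1$. The uniqueness statement of Theorem \ref{Natural} prescribes exactly which triplet must be targeted, so what remains is to confirm that the candidate $\phi$ is admissible and that the resulting generator satisfies $\tilde{\psi}=\psi$.

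For part (1), I would take $\phi\equiv 0$ and any $\eta\in\mathbb{R}$. The admissibility integral $\int(e^{\phi(y)/2}-1)^2\nu(dy)=0$ is trivially finite, so Theorem \ref{ChangeMeasure} produces $\mathbb{Q}$ with characteristic triplet $(\mu+c^2\eta,c,\nu)$. Since $\phi$ is even, Theorem \ref{PhiCondition} applies: formula \eqref{eq:QSymVariance} gives $\tilde{\sigma}^2=c^2+\int y^2\nu(dy)=\sigma^2$, and formula \eqref{eq:QSymCharacteristic} collapses term by term to the expression \eqref{eq:SymCharGen} for $\psi$, confirming $\tilde{\psi}=\psi$.

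For part (2), the constraint $c=0$ makes $\eta$ inert (since $\tilde{\mu}=\mu+c^2\eta=\mu$ irrespective of $\eta$), and I would take $\tilde{\nu}$ to be the image of $\nu$ under the dilation $y\mapsto\beta y$, setting $\phi=\log(d\tilde{\nu}/d\nu)$. This $\phi$ is even because $\nu$, and consequently $\tilde{\nu}$, are symmetric about the origin. Theorem \ref{ChangeMeasure} then delivers the triplet $(\mu,0,\tilde{\nu})$; Theorem \ref{PhiCondition} yields $\tilde{\sigma}^2=\int y^2\tilde{\nu}(dy)=\beta^2\sigma^2$, so $\tilde{\sigma}=\beta\sigma$, and the substitution $y\mapsto y/\beta$ inside the integral in \eqref{eq:QSymCharacteristic} shows $\tilde{\psi}(v)=\psi(v)$, i.e.\ the generator is preserved.

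The main obstacle sits entirely in part (2): the construction requires $\tilde{\nu}\ll\nu$ so that $\phi=\log(d\tilde{\nu}/d\nu)$ is well defined, and the admissibility condition $\int(e^{\phi(y)/2}-1)^2\nu(dy)<\infty$ from Theorem \ref{ChangeMeasure} then needs to be verified. Both properties are automatic when $\nu$ has a strictly positive smooth density on $\mathbb{R}\setminus\{0\}$, as in the VG and NIG examples of Section 5; more generally, one would invoke the standing assumption that $Y_1$ has finite exponential moments to control the tails of $e^\phi$ near $\pm\infty$ and the smoothness of the density near $0$ to control the small-jump contribution.
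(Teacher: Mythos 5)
Your construction is essentially the paper's own proof: the paper disposes of this theorem in one line, saying it immediately follows from Theorems \ref{ChangeMeasure} and \ref{PhiCondition} and the proof of Theorem \ref{Natural}, and what you have written is exactly the fleshed-out version of that line --- $\phi\equiv 0$ with a free $\eta$ in case (1), and $\phi=\log(d\nu_\beta/d\nu)$ for the dilated measure in case (2). Your verification that $\tilde{\psi}=\psi$ via \eqref{eq:QSymVariance} and the substitution $y\mapsto y/\beta$ in \eqref{eq:QSymCharacteristic} is precisely the computation the paper declares ``easy to check.''

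The obstacle you isolate in part (2) is genuine, and it is a gap in the paper as much as in your write-up: for the dilation $\nu_\beta$ to arise from an \emph{equivalent} change of measure, Theorem \ref{ChangeMeasure} (i.e.\ Sato's Theorems 33.1--33.2) requires $\nu_\beta\sim\nu$ together with $\int(e^{\phi(y)/2}-1)^2\,\nu(dy)<\infty$, and neither holds for an arbitrary symmetric L\'evy measure. Take $\nu=\delta_{1}+\delta_{-1}$: then $\nu_\beta=\delta_{\beta}+\delta_{-\beta}$ is singular with respect to $\nu$ for every $\beta\neq 1$, so no equivalent measure of the advertised form exists and statement (2) fails as written. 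Your suggested general repair --- invoking finite exponential moments of $Y_1$ --- does not close this, because moment conditions control the tail mass of $\nu$ but say nothing about its absolute-continuity structure; the point-mass example above has all exponential moments. The correct reading is that part (2) holds under the additional, unstated hypothesis that $\nu$ is quasi-invariant under dilations with the Hellinger-type integral finite, which you rightly observe is satisfied by the VG and NIG L\'evy measures since both have strictly positive densities on $\mathbb{R}\setminus\{0\}$ with exponential decay. So: same route as the paper, with an honest flag on a hypothesis the paper omits, but your proposed general fix is not the right one.
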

\begin{proof}
The proof immediately follows from Theorems \ref{ChangeMeasure} and \ref{PhiCondition}, as well as from the proof of Theorem \ref{Natural}. It is also easy to check that the
$\mathbb{Q}$-distribution of $Y_1$ is as claimed.
\end{proof}

%%%%%%%%%%%%%%%%%%%   Option Pricing    %%%%%%%%%%%%%%%%%%%%%

\section{Option Pricing with a Natural EMM}\label{OptionPricing}

\subsection{Natural Equivalent Martingale Measures}

Let now $S_t=S_0e^{Y_t}$ be a model for stock prices, where $Y_t$ is
a symmetric L\'evy process. According to the Fundamental Theorems of
Mathematical Finance, options on stock are priced by using an EMM
$\mathbb{Q}$, under which the discounted stock price process $e^{-rt}S_t,~ 0
< t \leq T$ is a martingale.

To the requirement that $\mathbb{Q}$ be a natural equivalent measure we now add the condition that it also be a martingale measure.

\begin{theorem}\label{Condition1}
\begin{enumerate}
\item[(1)] Let $\mathbb{Q}$ be a natural EMM for a symmetric L\'evy
process, then the following relation must hold between the parameters of the $\mathbb{Q}$-distribution of $Y_1$,
\begin{equation}\label{mainrel}
\tilde \mu+\ln \psi\Big(-\tilde{\sigma}^2/2\Big)=r.
\end{equation}
\item[(2)] If a Brownian component is present ($c\not=0$) and
$\mathbb{Q}$ is a natural EMM, then
\begin{equation}\label{NEMM1}
\tilde{\mu} = r - \ln \psi\left(-\sigma^2/2\right).
\end{equation}
Further, such $\mathbb{Q}$ exists and is unique.
\item[(3)] If   a Brownian component is absent ($c=0$) and $\mathbb{Q}$ is  a natural
EMM, then
    $\tilde{\sigma}^2$ is a root
of equation
    \begin{equation}\label{NEMM2}
    \ln \psi\Big(-\tilde{\sigma}^2/2\Big) =  r - \mu.
    \end{equation}
Further, such $\mathbb{Q}$ exists if and only if the $\mu<r$, and  when it
exists it is unique.
\end{enumerate}
\end{theorem}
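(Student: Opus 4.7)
The plan is to encode the discounted-martingale property as a single equation on the $\mathbb{Q}$-parameters of $Y_1$, and then superimpose the constraints from Theorem \ref{Natural} to handle the two cases $c\neq 0$ and $c=0$ separately. Because $Y_t$ is a L\'evy process under $\mathbb{Q}$, the martingale condition $\mathbb{E}_\mathbb{Q}[e^{-rt}S_t]=S_0$ collapses to $\mathbb{E}_\mathbb{Q}[e^{Y_1}]=e^r$. Since $\mathbb{Q}$ is natural, the $\mathbb{Q}$-distribution of $Y_1$ is $S(\tilde\mu,\tilde\sigma^2,\psi)$ with the same generator $\psi$, and under the finite exponential moment assumption, analytic continuation of $\varphi_{Y_1}(u)=e^{iu\tilde\mu}\psi(\tilde\sigma^2 u^2/2)$ to $u=-i$ gives $\mathbb{E}_\mathbb{Q}[e^{Y_1}]=e^{\tilde\mu}\psi(-\tilde\sigma^2/2)$; taking logarithms yields \eqref{mainrel}, which is (1).

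For (2), Theorem \ref{Natural}(1) says that a natural change of measure with $c\neq 0$ leaves $c$ and $\nu$ (and hence, via \eqref{eq:VarianceofLevy}, also $\tilde\sigma^2=\sigma^2$) unchanged, so $\tilde\mu$ is the only free parameter. Substituting $\tilde\sigma^2=\sigma^2$ into \eqref{mainrel} uniquely determines $\tilde\mu$ as in \eqref{NEMM1}; the corresponding $\mathbb{Q}$ is produced by Theorem \ref{suffcond}(1) with the choice $\eta=(\tilde\mu-\mu)/c^2$.

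For (3), Theorem \ref{Natural}(2) forces $\tilde\mu=\mu$ and $\tilde\nu(A)=\int 1_{\{A\}}(\beta y)\,\nu(dy)$ for some $\beta>0$, so \eqref{mainrel} reduces to \eqref{NEMM2} and existence of a natural EMM becomes equivalent to existence of a positive root $\tilde\sigma^2$ of $\ln\psi(-\tilde\sigma^2/2)=r-\mu$. I would analyse the function
\[
g(v):=\ln\psi(-v/2),\qquad v\ge 0,
\]
using \eqref{eq:SymCharGen} with $c=0$ together with the identity $1-\cos(ix)=1-\cosh(x)$ to obtain
\[
g(v)= 2\int_0^\infty\bigl(\cosh(y\sqrt{v}/\sigma)-1\bigr)\,\nu(dy).
\]
Since $\nu$ is non-trivial (because $c=0$ and $Y$ is non-degenerate) and $\cosh(\cdot)-1$ is nonnegative, strictly increasing on $[0,\infty)$, and tends to $\infty$, the function $g$ is continuous, strictly increasing, satisfies $g(0)=0$, and $g(v)\to\infty$. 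Hence $g$ is a bijection from $[0,\infty)$ onto $[0,\infty)$, so \eqref{NEMM2} admits a (unique) positive solution $\tilde\sigma^2$ precisely when $r-\mu>0$; the associated EMM is then produced by Theorem \ref{suffcond}(2) with $\beta=\tilde\sigma/\sigma$.

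The main technical obstacle sits in (3): one must rigorously justify the extension of $\psi$ to negative arguments (via analytic continuation under the finite exponential moment hypothesis) and verify that $g$ attains every positive value. This is exactly the step where the absence of a Brownian component, together with the non-triviality of $\nu$, plays its essential role, and it is also what forces the existence threshold $\mu<r$.
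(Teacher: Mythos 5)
Your proposal is correct, and for parts (1) and (2) it coincides with the paper's argument: the martingale condition collapses to $\mathbb{E}_\mathbb{Q}[e^{Y_1}]=e^r$, the natural-EMM hypothesis gives $\mathbb{E}_\mathbb{Q}[e^{Y_1}]=e^{\tilde\mu}\psi(-\tilde\sigma^2/2)$, and Theorem \ref{Natural} pins down which parameter is free in each case. Where you genuinely diverge is part (3). The paper only establishes the \emph{necessity} of $\mu<r$, and does so via Jensen's inequality ($\mathbb{E}_\mathbb{Q}[e^{Y_1}]\ge e^{\mathbb{E}_\mathbb{Q}[Y_1]}=e^{\mu}$); it leaves the \emph{sufficiency} (that a root of \eqref{NEMM2} actually exists when $\mu<r$) and the strictness of the inequality essentially unargued. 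You supply both by writing $g(v)=\ln\psi(-v/2)=2\int_0^\infty\bigl(\cosh(y\sqrt{v}/\sigma)-1\bigr)\nu(dy)$ from \eqref{eq:SymCharGen} with $c=0$ and showing $g$ is a continuous, strictly increasing bijection of $[0,\infty)$ onto $[0,\infty)$; strict monotonicity then gives $r-\mu>0$ as necessary (sharpening Jensen, whose conclusion is only $r\ge\mu$) and surjectivity gives existence and uniqueness of $\tilde\sigma^2$, hence of $\beta=\tilde\sigma/\sigma$ feeding into Theorem \ref{suffcond}(2). This is a more complete treatment than the paper's. Two small points to tidy: (i) non-triviality of $\nu$ (needed for strict monotonicity and for $g(v)\to\infty$) should be stated as the standing non-degeneracy assumption, and the divergence $g(v)\to\infty$ follows from $\nu([\epsilon,\infty))>0$ for some $\epsilon>0$ together with monotone convergence; (ii) finiteness of $g(v)$ for all $v\ge0$ uses exponential moments of all orders, but even if $g$ blows up at a finite $v_0$ the intermediate value theorem still applies on $[0,v_0)$ since $g$ increases to $+\infty$ there, so your conclusion survives under the paper's ``finite exponential moments'' hypothesis.
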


\begin{proof}
Imposing the martingale property under $\mathbb{Q}$ to $e^{-rt}S_t$ leads to the requirement that
$$\mathbb{E}_\mathbb{Q}\left[e^{Y_{t-s}}\right] = e^{r(t-s)}.$$
On the other hand, since under the natural EMM $Y_1$ has distribution $S(\tilde\mu,\tilde{\sigma}^2,\psi)$, we have
$$\mathbb{E}_\mathbb{Q}\left[e^{Y_1}\right] =e^{\tilde\mu}\psi\Big(-\tilde{\sigma}^2/2\Big).$$
Therefore the martingale property holds if and only if
$$e^{\tilde\mu}\psi\Big(-\tilde{\sigma}^2/2\Big)=e^r,$$
 which is
equivalent to \eqref{mainrel}.

By the natural change of measure Theorem \ref{Natural},
if $c\not=0$,  only $\mu$ can be changed, consequently we obtain \eqref{NEMM1}.
If $c=0$, only $\sigma^2$ can be changed, hence  we obtain \eqref{NEMM2}.

When $c=0$, the requirement that $\mu<r$ immediately follows by Jensen's inequality:
$\mathbb{E}_\mathbb{Q}\big[e^{Y_1}\big]\geq e^{\mu}$.
\end{proof}

\begin{remark}[Discrete time] In discrete time the natural EMM always
exists, in contradiction with the continuous-time setting of L\'evy process without Brownian
component. Furthermore, the parameters satisfy \eqref{mainrel} -- for
details, see \cite{KlebanerLandsman07}. In discrete time, the natural EMM always exists (and is unique)
and is obtained by changing the location $\mu$ while keeping $\sigma^2$.
\end{remark}

\subsection{Option Pricing with a Natural EMM}
According to the method of pricing by no arbitrage, the value of the option at time $0$ is given by the expectation of the payoff function, i.e.,
\begin{equation}
C_0 = e^{-rT}\mathbb{E}_\mathbb{Q}\big[\big(S_T - K\big)^+\big] \label{eq:OPF}
\end{equation}
where $T$ is the time to maturity, $K$ is the strike price, and $\mathbb{Q}$ is an EMM.
The above formula \eqref{eq:OPF} is arbitrage-free even when $\mathbb{Q}$ is not unique (\cite{EberleinJacob97}, \cite{ShiryaevKruzhilin1999} p.398).

In this section we write the option pricing formula \eqref{eq:OPF} using change of
num\'eraire (see \cite{GemanOthers95}, or \cite{Klebaner2005}, section
11.5), which gives
\begin{equation}
C_0 = S_0\mathbb{Q}_1\big(S_T > K\big) - e^{-rT}K\mathbb{Q}\big(S_T > K\big),\label{eq:OP1}
\end{equation}
where $\mathbb{Q}_1$, defined by
\begin{equation}
\frac{d\mathbb{Q}_1}{d\mathbb{Q}} = e^{-rT}\frac{S_T}{S_0}, \label{eq:DQ1}
\end{equation}
is the measure under which the process $e^{rt}/S_t$ is a martingale.

\subsubsection{Symmetric L\'evy Returns with Brownian Component}\label{S1}

Let $Y_t$ be a symmetric L\'evy process with $\mathbb{P}$-characteristic triplet
$(\mu,c,\nu)$ and such that $c\neq0$. Let $S(\mu,\sigma^2,\psi)$ be the
$\mathbb{P}$-distribution of $Y_1$ and $\mathbb{Q}$ be the natural EMM (for $Y_t$).
Then, under $\mathbb{Q}$, $Y_t$ remains a
symmetric L\'evy process with characteristic triplet $(\tilde{\mu},c,\nu)$, and the
distribution of $Y_1$ becomes $S(\tilde{\mu},\sigma^2,\psi)$, where
$$\tilde{\mu} = r - \ln \psi\big(-\sigma^2/2\big).$$

Now, it is easy to see that $\mathbb{Q}_1$ is also a natural EMM. Indeed, since $-Y_t$ is
a L\'evy process with $\mathbb{P}$-characteristic triplets $(-\mu,c,\nu)$, and since
the distribution of $-Y_1$ is $S(-\mu,\sigma^2,\psi)$, $\mathbb{Q}_1$ is chosen so that
\begin{equation}
\tilde{\mu}_1 = r + \ln\psi\Big(-\frac{\sigma^2}{2}\Big). \label{eq:MStar}
\end{equation}
This choice is unique as the location parameter ($\tilde{\mu}_1$) uniquely determines $\eta$,
which in turn specifies the equivalent measure.
By the uniqueness of $\tilde{\mu}_1$, $\mathbb{Q}_1$ is unique.

Under $\mathbb{Q}_1$, $Y_t$ is a symmetric L\'evy process with marginals from the family $S(\tilde{\mu}_1t,\sigma^2t,\psi_t)$.

\begin{proposition}
Denote by $F_T$ the $\mathbb{P}$-distribution function of the standardized variavle $(Y_T-\mu T)/(\sigma\sqrt{T})$:
$$F_T(y) = \mathbb{P}\big(Y_T\leq\sigma\sqrt{T}y+\mu T\big).$$
Then
$$F_T(y) = \mathbb{Q}\big(Y_T\leq\sigma\sqrt{T}y+\tilde{\mu}T\big) = \mathbb{Q}_1\big(Y_T\leq\sigma\sqrt{T}y+\tilde{\mu}_1T\big),$$
and the option pricing formula \eqref{eq:OP1} becomes
\begin{eqnarray}
C_0 &=& S_0F_T\left(\frac{\ln\big(\frac{S_0}{K}\big) + \big(r + \ln \psi(-\sigma^2/2)\big)T}{\sigma\sqrt{T}} \right) \nonumber \\
& & - e^{-rT}KF_T\left(\frac{\ln\big(\frac{S_0}{K}\big) + \big(r - \ln \psi(-\sigma^2/2)\big)T}{\sigma\sqrt{T}} \right). \label{eq:EF1}
\end{eqnarray}
\end{proposition}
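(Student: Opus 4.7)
The plan is to exploit the fact, already established in Theorems \ref{Natural}--\ref{suffcond} and Proposition \ref{tripleSym}, that under each of the three measures $\mathbb{P}$, $\mathbb{Q}$, $\mathbb{Q}_1$ the marginal $Y_T$ lies in the same symmetric family $S(\cdot, \sigma^2 T, \psi_T)$, with only the location parameter changing. Because $c$ and $\nu$ are unchanged under both natural EMMs (Section \ref{S1}), the scale $\sigma^2 T$ and characteristic generator $\psi_T$ are identical in all three cases; only the centres $\mu T$, $\tilde\mu T$, $\tilde\mu_1 T$ differ. Standardizing therefore yields the common distribution function $F_T$, which gives the first displayed chain of equalities immediately.

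Next I would reduce the option price to probabilities of the form $\{Y_T > \ln(K/S_0)\}$. Starting from
\begin{equation*}
C_0 = S_0\mathbb{Q}_1(S_T>K) - e^{-rT}K\mathbb{Q}(S_T>K),
\end{equation*}
and using $S_T=S_0e^{Y_T}$, I would write
\begin{equation*}
\mathbb{Q}(S_T>K) = \mathbb{Q}\!\left(\frac{Y_T-\tilde{\mu}T}{\sigma\sqrt{T}} > \frac{\ln(K/S_0)-\tilde{\mu}T}{\sigma\sqrt{T}}\right),
\end{equation*}
and similarly for $\mathbb{Q}_1$. The key small observation is that since $F_T$ is the distribution function of a \emph{symmetric} (about zero) random variable, $1-F_T(x)=F_T(-x)$ for every $x$. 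Applying this turns each probability into
\begin{equation*}
\mathbb{Q}(S_T>K)=F_T\!\left(\frac{\ln(S_0/K)+\tilde{\mu}T}{\sigma\sqrt{T}}\right),\qquad \mathbb{Q}_1(S_T>K)=F_T\!\left(\frac{\ln(S_0/K)+\tilde{\mu}_1T}{\sigma\sqrt{T}}\right).
\end{equation*}

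Finally I would substitute the explicit expressions $\tilde{\mu}=r-\ln\psi(-\sigma^2/2)$ from Theorem \ref{Condition1} and $\tilde{\mu}_1=r+\ln\psi(-\sigma^2/2)$ from \eqref{eq:MStar}, which produces formula \eqref{eq:EF1} verbatim. There is no real obstacle here: everything follows from (i) the invariance of scale and generator across the three measures, (ii) the symmetry of $F_T$, and (iii) the already-computed values of the location parameters. The only minor care required is ensuring that the centring constant used in the definition of $F_T$ is matched with the correct location under each measure when standardizing, which is exactly what makes the three probabilities collapse to the same $F_T$.
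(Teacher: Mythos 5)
Your proposal is correct and follows essentially the same route as the paper: the paper's proof likewise rests on the observation that the standardized variable has the same law (characteristic function $\big(\psi(u^2/(2T))\big)^T$) under all three measures, then invokes the symmetry $1-F_T(a)=F_T(-a)$ and "simple arithmetic," which is exactly the computation you spell out.
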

\begin{proof}
The first statement follows from the fact that the distribution of $(Y_T-\mathbb{E}[T])/(\sigma\sqrt{T})$ is the same for all three
measures $\mathbb{P}$, $\mathbb{Q}$ and $\mathbb{Q}_1$; its characteristic function, under all three probabilities, is given
by $\big(\psi\left(u^2/(2T)\right)\big)^T$.

Also, since $Y_T$ is symmetric about $\mu T$, $F_T$ is symmetric about zero and $1 - F_T(a) = F_T(-a)$.
\eqref{eq:EF1} now follows by simple arithmetic.
\end{proof}

\subsubsection{Symmetric L\'evy Returns without Brownian Component}
\label{S2}

Consider now the case when $Y_t$ is a symmetric L\'evy process with $\mathbb{P}$-characteristic triplet
$(\mu,0,\nu)$. Suppose further that the interest rate $r$ is greater than the location parameter $\mu$.
As before, let $S(\mu,\sigma^2,\psi)$ be the
$\mathbb{P}$-distribution of $Y_1$ and $\mathbb{Q}$ be the natural EMM (for $Y_t$).
Then the $\mathbb{Q}$-distribution of $Y_1$ becomes $S(\mu,\tilde{\sigma}^2,\psi)$, where
$\tilde{\sigma}^2$ is the solution of the equation \eqref{NEMM2}.

Unlike the case of symmetric L\'evy returns with a Brownian component
the EMM $\mathbb{Q}_1$ does not define a natural change of measure.
However, in specific cases considered here (Variance Gamma and Normal Inverse Gaussian),
we are able to identify the distributions of $(Y_T-\mu T)/(\tilde{\sigma}\sqrt{T})$ under $\mathbb{Q}$ and
that of $(Y_T-\mu T)/(\tilde{\sigma}_1\sqrt{T})$ under $\mathbb{Q}_1$. Denoting by $F_T$
and $F_T^1$ the respective cumulative distribution functions, we can write
\begin{eqnarray}
C_0 &=& S_0\mathbb{Q}_1(S_T > K) - e^{-rT}K\mathbb{Q}(S_T > K) \label{eq:EF2} \\
&=& S_0\left[1 - F_T^1\left(-\frac{\ln\big(\frac{S_0}{K}\big) + \mu T}{\tilde{\sigma}_1\sqrt{T}}\right)\right]
         - e^{-rT}KF_T\left(\frac{\ln\big(\frac{S_0}{K}\big) + \mu T}{\tilde{\sigma}\sqrt{T}} \right). \nonumber
\end{eqnarray}

\section{Variance Gamma Model}\label{LSVG}
Here the stock price is modelled as $S_t=S_0e^{Y_t}$, where $Y_t$ is a Variance-Gamme (VG) process.
The marginal distributions of the VG process was originally given in
\cite{MadanCarrChang98} in terms of special functions
 involving the
modified Bessel function of the second kind and the degenerate
hypergeometric function. In the special case of symmetric processes
the marginals turn out to be symmetric Bessel distributions. This
observation leads to elegant formulae.

\subsection{Symmetric Variance Gamma Process } \label{SVGP}

Denote by $Bessel(\mu, \sigma^2, \lambda)$ the Bessel distribution with mean
 $\mu$, variance $\sigma^2$ and shape parameter $\lambda$. A   symmetric Bessel distribution has mean $\mu = 0$, and   characteristic function (\cite{JohnsonKotzBala94}, p.51) of the form
    \[
    \varphi(u) = \bigg(\frac{1}{1 + \frac{u^2 \sigma^2}{2\lambda}} \bigg)^{\lambda}.
    \]
The density function of the symmetric Bessel distribution is given by (\cite{JohnsonKotzBala94}, p.50)
    \begin{equation}
    f (x) = \sqrt{\frac{2\lambda}{\pi \sigma^2}}\bigg(\sqrt{\frac{\lambda x^2}{2\sigma^2}}\bigg)^{\lambda - \frac{1}{2}}\frac{1}{\Gamma(\lambda)} K_{\lambda - \frac{1}{2}}\bigg(2\sqrt{\frac{\lambda x^2}{2\sigma^2}}\bigg), \label{eq:SymBesselDensity2}
    \end{equation}
where $K_w(.)$ is the modified Bessel function of the second kind.
     We always consider symmetric Bessel distribution shifted by   $\mu$ but we will drop the word ``shifted''.
     The symmetric Bessel distribution $Bessel(\mu , \sigma^2, \lambda)$ belongs to the family of symmetric
      distributions $S(\mu, \sigma^2, \psi)$ with characteristic
      generator
    \begin{equation}
    \psi(v) =  \bigg(\frac{1}{1 + \frac{v}{\lambda}} \bigg)^{\lambda}. \label{eq:CharGenBessel}
    \end{equation}
We note that the kurtosis of symmetric Bessel distribution is $3 +
\frac{3}{\lambda}$ and hence, the shape parameter $\lambda$ is
related to the excess kurtosis by $\lambda = \frac{3}{\gamma}$.
(since the excess kurtosis of a random variable $Y$ is $\gamma =
\frac{\mathbb{E}[(Y - \mu)^4]}{\sigma^4} - 3$).

A symmetric VG  $Y_t$  has characteristic function
    \begin{equation}
    \mathbb{E}\left[e^{iuY_t}\right] = e^{iu\mu t}\bigg(\frac{1}{1 + \frac{u^2 \sigma^2\kappa}{2}} \bigg)^{\frac{t}{\kappa}} = e^{iu\mu t}\bigg(\frac{1}{1 + \frac{u^2 \sigma^2t}{2\lambda t}} \bigg)^{\lambda t}, \label{eq:CharSymVGt2}
    \end{equation}
in which we have employed $\lambda = \frac{1}{\kappa}$.  By
inspecting the characteristic function we have

\begin{proposition}\label{VGBessel}
The marginals of a symmetric Variance Gamma process
$Y_t$ is a symmetric Bessel distribution with mean $\mu t$,
variance $\sigma^2 t$ and shape parameter $\lambda t$, i.e., $Y_t \sim Bessel(\mu t, \sigma^2t, \lambda t)$, which belongs to the family of symmetric distributions $S(\mu t, \sigma^2t, \psi_t)$ where the characteristic generator is given by
    \begin{equation}
    \psi_t(v) = \big(\psi(v/t)\big)^t = \bigg(\frac{1}{1 + \frac{v}{\lambda t}} \bigg)^{\lambda t}. \label{eq:CharGenVG}
    \end{equation}
\end{proposition}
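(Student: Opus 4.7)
The plan is to prove this by direct inspection of the characteristic function \eqref{eq:CharSymVGt2}, matching it against the general form \eqref{eq:CFSymmetric} of a symmetric characteristic function and against the specific Bessel form \eqref{eq:CharGenBessel} given above.

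First, I would rewrite the characteristic function of $Y_t$ in \eqref{eq:CharSymVGt2} in the canonical symmetric form $\varphi_{Y_t}(u) = e^{iu\mu_t}\psi_t\!\left(\frac{\sigma_t^2 u^2}{2}\right)$. Pulling the factor $e^{iu\mu t}$ out identifies the location parameter as $\mu_t = \mu t$. Substituting $v = \sigma^2 t \, u^2/2$ in the remaining factor $\bigl(1 + u^2\sigma^2 t/(2\lambda t)\bigr)^{-\lambda t}$ shows that the scale parameter is $\sigma^2 t$ and the characteristic generator equals
\[
\psi_t(v) = \Bigl(1 + \frac{v}{\lambda t}\Bigr)^{-\lambda t},
\]
which establishes \eqref{eq:CharGenVG}. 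Comparing this expression with the Bessel characteristic generator \eqref{eq:CharGenBessel}, where the shape parameter enters as $\lambda$, shows that $\psi_t$ is itself a Bessel characteristic generator with shape parameter $\lambda t$. Hence $Y_t \sim Bessel(\mu t, \sigma^2 t, \lambda t)$.

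Finally, I would verify consistency with the general identity \eqref{eq:gt} of Proposition \ref{tripleSym}: starting from $\psi(v) = (1 + v/\lambda)^{-\lambda}$ from \eqref{eq:CharGenBessel}, a direct computation gives $\bigl(\psi(v/t)\bigr)^t = \bigl(1 + v/(\lambda t)\bigr)^{-\lambda t}$, matching the $\psi_t$ just obtained. This confirms both the Bessel identification and that the characteristic generator transforms under time scaling as predicted by Proposition \ref{tripleSym}.

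There is no real obstacle here: the argument is entirely a matter of matching algebraic forms of the characteristic function, and the only subtlety is the parameter bookkeeping ($\kappa = 1/\lambda$ and the distinction between the Bessel parameter $\lambda$ and the argument-time product $\lambda t$). The proof is therefore short and essentially reduces to the remark that \eqref{eq:CharSymVGt2} is already in symmetric Bessel form with the stated parameters.
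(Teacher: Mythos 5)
Your proof is correct and follows essentially the same route as the paper, which establishes the proposition simply ``by inspecting the characteristic function'' \eqref{eq:CharSymVGt2} and matching it to the symmetric Bessel form with generator \eqref{eq:CharGenBessel}. Your additional consistency check against the time-scaling identity \eqref{eq:gt} of Proposition \ref{tripleSym} is a harmless (and sensible) extra verification, not a departure in method.
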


\subsection{Option Pricing with Symmetric VG}
\subsubsection{Continuous Time}
We determine the distributions of $Y_t$ under the EMMs $\mathbb{Q}$ and
$\mathbb{Q}_1$. By Theorems \ref{Natural} and \ref{Condition1}, if  $\mu <
r$, the $\mathbb{Q}$-distribution of $Y_1$ is symmetric Bessel
$Bessel(\mu, \tilde{\sigma}^2, \psi)$ where by using
\eqref{eq:CharGenBessel} we get from \eqref{NEMM2}
    \begin{align}
    \tilde{\sigma}^2 = 2\lambda(1 - e^{-(r - \mu)/\lambda}). \label{eq:QVarVG}
    \end{align}

Under $\mathbb{Q}_1$, the distribution of $Y_t$ is identified by the following.

\begin{proposition}\label{Q1Density}
 Denote by $f_{Y_t}^{\mathbb{Q}}$ the $\mathbb{Q}$-density of $Y_t \sim Bessel(\mu t, \tilde{\sigma}^2t, \lambda t)$.
 Then the density of $Y_t$ under $\mathbb{Q}_1$,   given by $e^{y - rt}f_{Y_t}^{\mathbb{Q}}(y)$, is the density function of an asymmetric Bessel distribution.
\end{proposition}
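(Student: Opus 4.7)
The plan is first to justify the form of the $\mathbb{Q}_1$-density as an exponential tilt of the $\mathbb{Q}$-density, and then to identify the tilted density with an asymmetric Bessel (equivalently, asymmetric Variance Gamma) density by computing characteristic functions.

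For the first step, I would restrict the Radon--Nikodym derivative $d\mathbb{Q}_1/d\mathbb{Q} = e^{-rT}S_T/S_0$ from \eqref{eq:DQ1} to $\mathcal{F}_t$ and use the $\mathbb{Q}$-martingale property of $(e^{-rs}S_s)_{s\leq T}$ together with the tower property to obtain
$$\left.\frac{d\mathbb{Q}_1}{d\mathbb{Q}}\right|_{\mathcal{F}_t} = \mathbb{E}_{\mathbb{Q}}\!\left[e^{-rT}S_T/S_0\,\big|\,\mathcal{F}_t\right] = e^{-rt}S_t/S_0 = e^{Y_t - rt}.$$
Hence for every bounded Borel $g$, $\mathbb{E}_{\mathbb{Q}_1}[g(Y_t)] = \mathbb{E}_{\mathbb{Q}}[g(Y_t)e^{Y_t-rt}]$, which gives the claimed tilted density $e^{y-rt}f_{Y_t}^{\mathbb{Q}}(y)$.

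For the identification step I would work with the moment generating function rather than the explicit Bessel-$K$ density. Proposition \ref{VGBessel} gives, on the domain of convergence, $M_{Y_t}^{\mathbb{Q}}(s) = e^{s\mu t}\bigl(1 - s^2\tilde{\sigma}^2/(2\lambda)\bigr)^{-\lambda t}$, so
$$\varphi_{Y_t}^{\mathbb{Q}_1}(u) = e^{-rt}M_{Y_t}^{\mathbb{Q}}(iu+1) = e^{(\mu-r)t}e^{iu\mu t}\Bigl(1 - (iu+1)^2\tilde{\sigma}^2/(2\lambda)\Bigr)^{-\lambda t}.$$
Writing $A = 1 - \tilde{\sigma}^2/(2\lambda)$ and expanding $(iu+1)^2 = 1 + 2iu - u^2$, the bracketed factor equals $A\bigl(1 - iu\tilde{\sigma}^2/(\lambda A) + u^2\tilde{\sigma}^2/(2\lambda A)\bigr)$. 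The natural-EMM relation \eqref{eq:QVarVG} is precisely $A^{-\lambda} = e^{r-\mu}$, so $e^{(\mu-r)t}A^{-\lambda t} = 1$ and the extraneous prefactors telescope away. Setting $\nu := 1/\lambda$ and $\sigma_*^2 = \theta_* := \tilde{\sigma}^2/A$, I arrive at the Madan--Carr--Chang form
$$\varphi_{Y_t}^{\mathbb{Q}_1}(u) = e^{iu\mu t}\bigl(1 - iu\theta_*\nu + u^2\sigma_*^2\nu/2\bigr)^{-t/\nu},$$
which is exactly the characteristic function of an asymmetric Variance Gamma (asymmetric Bessel) distribution with location $\mu t$, volatility $\sigma_*$, skewness $\theta_*$ and kurtosis parameter $\nu$.

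Since the asymmetric VG density has the form $C\,e^{\theta_*(y-\mu t)/\sigma_*^2}\,|y-\mu t|^{\lambda t - 1/2}\,K_{\lambda t - 1/2}(\beta|y-\mu t|)$ with $\beta$ determined by the $(\sigma_*,\theta_*,\nu)$ above, and since the symmetric Bessel density \eqref{eq:SymBesselDensity2} centred at $\mu t$ contributes exactly the $|y-\mu t|^{\lambda t-1/2}K_{\lambda t-1/2}$ factor while the tilt $e^{y-rt}$ supplies precisely the missing exponential, the two densities agree up to a multiplicative constant; that constant must equal one because both are probability densities. The main obstacle I anticipate is the parameter bookkeeping between the symmetric parameterisation $(\mu,\tilde{\sigma}^2,\lambda)$ and the asymmetric parameterisation $(\mu,\sigma_*,\theta_*,\nu)$, and the verification that the exponential prefactors indeed cancel; this is what the identity $A^{-\lambda}=e^{r-\mu}$ from \eqref{eq:QVarVG} does for us, so the whole argument reduces to that single substitution plus an algebraic expansion of $(iu+1)^2$.
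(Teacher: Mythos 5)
Your argument is correct and takes a genuinely different route in the identification step, so it is worth comparing the two. The paper works entirely on the density side: it writes out the symmetric Bessel density \eqref{eq:SymBesselDensity2} for $Y_t$ under $\mathbb{Q}$, converts the tilt via $e^{-rt}=e^{-\mu t}\bigl(1-\tfrac{\tilde{\sigma}^2}{2\lambda}\bigr)^{\lambda t}$ (your identity $A^{-\lambda}=e^{r-\mu}$, appearing there as \eqref{eq:EYRT}), and then matches the resulting expression term by term against the asymmetric Bessel density \eqref{eq:BesselDensity2} with $a=-b=-\sqrt{\tilde{\sigma}^2/(2\lambda)}$ and $m=\lambda t-\tfrac12$; note that $-a/b=1$ is exactly your unit tilt parameter $\theta_*/\sigma_*^2=1$. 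You instead compute $\varphi^{\mathbb{Q}_1}_{Y_t}(u)=e^{-rt}M^{\mathbb{Q}}_{Y_t}(iu+1)$ and recognise the Madan--Carr--Chang form of the asymmetric VG characteristic function, with the same identity killing the prefactor. The Fourier route buys you the $\mathbb{Q}_1$-moments essentially for free --- your $\theta_*=\tilde{\sigma}^2/A=2\lambda\bigl(e^{(r-\mu)/\lambda}-1\bigr)$ is precisely the paper's $\mu_1$, and $\sigma_*^2+\theta_*^2/\lambda$ reproduces $\tilde{\sigma}_1^2$ --- whereas the paper must separately invoke the tabulated moment formulae \eqref{eq:MeanBessel}--\eqref{eq:VarianceBessel}. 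Your preliminary derivation of $\frac{d\mathbb{Q}_1}{d\mathbb{Q}}\big|_{\mathcal{F}_t}=e^{Y_t-rt}$ from the martingale and tower properties is also more careful than the paper, which merely asserts that the tilted density ``can be seen'' from \eqref{eq:DQ1}. The only point you should make explicit is that $M^{\mathbb{Q}}_{Y_t}(iu+1)$ is finite: by \eqref{eq:QVarVG}, $\tilde{\sigma}^2=2\lambda\bigl(1-e^{-(r-\mu)/\lambda}\bigr)<2\lambda$, hence $A>0$; this is exactly where the standing assumption $\mu<r$ enters.
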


\begin{proof}
From the change of num\'eraire defined by \eqref{eq:DQ1}, it can be seen that the density of $Y_t$ under $\mathbb{Q}_1$ is given by
    \begin{align}
    &e^{y - rt} f_{Y_t}^{\mathbb{Q}}(y) \nonumber\\
    &= e^{y - rt} \sqrt{\frac{2\lambda}{\pi \tilde{\sigma}^2}}\bigg(\sqrt{\frac{\lambda (y - \mu t)^2}{2\tilde{\sigma}^2}}\bigg)^{\lambda t - \frac{1}{2}}\frac{1}{\Gamma(\lambda t)} K_{\lambda t - \frac{1}{2}}\bigg(2\sqrt{\frac{\lambda (y - \mu t)^2}{2\tilde{\sigma}^2}}\bigg). \label{eq:Den1}
    \end{align}

Using the characteristic generator \eqref{eq:CharGenVG} for the symmetric Bessel distribution, it follows that
    \begin{equation}
    e^{y - rt} = e^{y - \mu t}\bigg(1 - \frac{\tilde{\sigma}^2}{2\lambda}\bigg)^{\lambda t}. \label{eq:EYRT}
    \end{equation}
Now, apply \eqref{eq:EYRT} and let $y^* = y - \mu t$, the expression in the second line of (\ref{eq:Den1}) becomes
    \begin{align*}
    e^{y^*}\bigg(1 - \frac{\tilde{\sigma}^2}{2\lambda}\bigg)^{\lambda t} \sqrt{\frac{2\lambda}{\pi \tilde{\sigma}^2}}\bigg(\sqrt{\frac{\lambda (y^*)^2}{2\tilde{\sigma}^2}}\bigg)^{\lambda t - \frac{1}{2}}\frac{1}{\Gamma(\lambda t)} K_{\lambda t - \frac{1}{2}}\bigg(\sqrt{\frac{2\lambda (y^*)^2}{\tilde{\sigma}^2}}\bigg).
    \end{align*}
Subsequently, let $m = \lambda t - \frac{1}{2}$, $a = -\sqrt{\frac{\tilde{\sigma}^2}{2\lambda}}$ and $b = \sqrt{\frac{\tilde{\sigma}^2}{2\lambda}}$, we obtain
    \begin{align}
    e^{y - rt} f_{Y_t}(y)
    &= e^{y^*}\big(1 - a^2\big)^{m + \frac{1}{2}} \frac{1}{b\sqrt{\pi}}\bigg(\frac{|y^*|}{2b}\bigg)^{m}\frac{1}{\Gamma(m + \frac{1}{2})} K_{m}\bigg(\frac{|y^*|}{b}\bigg) \nonumber \\
    &= \frac{(1 - a^2)^{m + \frac{1}{2}}|y^*|^m}{\sqrt{\pi}2^m b^{m + 1}\Gamma(m + \frac{1}{2})} e^{y^*} K_{m}\bigg(\bigg|\frac{y^*}{b}\bigg|\bigg). \label{eq:Den2}
    \end{align}
On closer observation, we immediately recognize that the density written in the form \eqref{eq:Den2} is the density of an asymmetric Bessel function distribution (\cite{JohnsonKotzBala94}, p.50), which has the form
    \begin{equation}
        f_Z(z) = \frac{|1 - a^2|^{m + \frac{1}{2}}|z|^m}{\sqrt{\pi}2^m b^{m + 1}\Gamma\big(m + \frac{1}{2}\big)}e^{-\frac{az}{b}} K_m\Big(\Big|\frac{z}{b}\Big|\Big). \label{eq:BesselDensity2}
    \end{equation}

\end{proof}

For completeness we give explicit expressions of the mean, variance,
skewness and kurtosis of an asymmetric Bessel  distribution
  (\cite{JohnsonKotzBala94}, p.51).
    \begin{align}
    \mbox{Mean} &= (2m + 1)ba(a^2 - 1)^{-1} \label{eq:MeanBessel} \\
    \mbox{Variance} &= (2m + 1)b^2(a^2 + 1)(a^2 - 1)^{-2} \label{eq:VarianceBessel}\\
    \mbox{Skewness} &= 2a(a^2 + 3)(2m + 1)^{-1/2}(a^2 + 1)^{-3/2} \label{eq:SkewnessBessel}\\
    \mbox{Kurtosis} &= 3 + 6(a^4 + 6a^2 + 1)(2m + 1)^{-1}(a^2 + 1)^{-2} \label{eq:KurtosisBessel}
    \end{align}
where $a = -\sqrt{\frac{\tilde{\sigma}^2}{2\lambda}}$, $b = -a$, and $m = \lambda t - \frac{1}{2}$.

Let   $Y_t^* = Y_t - \mu t$, it follows from \eqref{eq:Den2} that
$Y_t^*$ has an asymmetric Bessel distribution, denoted
$Bessel^1(\mu_1 t, \tilde{\sigma}_1^2t, \lambda t)$, where the mean
$\mu_1t$ and variance $\tilde{\sigma}_1^2t$ are given by
(\ref{eq:MeanBessel}) and (\ref{eq:VarianceBessel}) respectively. In
particular,
    \begin{align}
    \mathbb{E}[Y_1^*] &= \mu_1 = 2\lambda\big(e^{(r - \mu)/\lambda} - 1\big),\\
    Var(Y_1^*) &= \tilde{\sigma}_1^2 = 2\lambda\big(e^{(r - \mu)/\lambda} - 1\big)\big(2e^{(r - \mu)/\lambda} - 1\big),
    \end{align}
in which we have employed \eqref{eq:QVarVG}. Therefore under $\mathbb{Q}_1$,
$Y_t = Y_t^* + \mu t \sim Bessel^1(\mu t + \mu_1 t,
\tilde{\sigma}_1^2t, \lambda t)$.

Finally, denote by $B_{\lambda t}(y)$ the cumulative distribution function of the standardized symmetric Bessel random variable
 $\frac{Y_t - \mu t}{\tilde{\sigma}\sqrt{t}} \sim Bessel(0, 1, \lambda t)$ under $\mathbb{Q}$, and by $B_{\lambda t}^1(y)$
 the cumulative distribution function of the standardized asymmetric Bessel random
 variable $\frac{Y_t - \mu t - \mu_1 t}{\tilde{\sigma}_1\sqrt{t}} \sim Bessel^1(0, 1, \lambda t)$ under
 $\mathbb{Q}_1$.
\begin{proposition} Let $Y_t \sim Bessel(\mu t, \sigma^2t, \lambda
t)$, and $\mu<r$. Then the arbitrage-free price of a call option
using natural EMM is given by
    \begin{align}
    C_0 &= S_0\left[1 - B_{\lambda T}^1\left(-\frac{\ln\big(\frac{S_0}{K}\big) + \mu T + \mu_1T}{\tilde{\sigma}_1 \sqrt{T}} \right)\right] \nonumber\\
    &\hspace{40mm} - e^{-rT}KB_{\lambda T}\left(\frac{\ln\big(\frac{S_0}{K}\big) + \mu T}{\tilde{\sigma}\sqrt{T}} \right), \label{eq:ExactFormula}
    \end{align}
where $\mu_1 = 2\lambda\big(e^{(r - \mu)/\lambda} - 1\big)$, $\tilde{\sigma}_1^2 = 2\lambda\big(e^{(r - \mu)/\lambda} - 1\big)\big(2e^{(r - \mu)/\lambda} - 1\big)$ and $\tilde{\sigma}^2 = 2\lambda(1 - e^{-(r - \mu)/\lambda})$.

\end{proposition}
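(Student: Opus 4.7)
The plan is to start from the change-of-num\'eraire representation \eqref{eq:OP1},
$$C_0 = S_0\,\mathbb{Q}_1(S_T>K) - e^{-rT}K\,\mathbb{Q}(S_T>K),$$
and to evaluate each probability by inserting the explicit $\mathbb{Q}$- and $\mathbb{Q}_1$-distributions of $Y_T$ already established for the symmetric VG model. Since $S_T>K$ is equivalent to $Y_T>\ln(K/S_0)$, the whole proof reduces to expressing each of these two tail probabilities in terms of the standardised CDFs $B_{\lambda T}$ and $B_{\lambda T}^1$.

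First I would handle the $\mathbb{Q}$-term. By Theorem \ref{Condition1}(3) together with \eqref{eq:CharGenVG} and \eqref{eq:QVarVG}, $Y_T$ is $Bessel(\mu T,\tilde{\sigma}^2 T,\lambda T)$ under $\mathbb{Q}$, so the standardised variable $(Y_T-\mu T)/(\tilde{\sigma}\sqrt{T})$ has CDF $B_{\lambda T}$. Writing out $\mathbb{Q}(Y_T>\ln(K/S_0))=1-B_{\lambda T}\bigl((\ln(K/S_0)-\mu T)/(\tilde{\sigma}\sqrt{T})\bigr)$ and then using the symmetry of $B_{\lambda T}$ about $0$ (i.e.\ $1-B_{\lambda T}(x)=B_{\lambda T}(-x)$) gives exactly the second summand of \eqref{eq:ExactFormula}.

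Next I would handle the $\mathbb{Q}_1$-term. Proposition \ref{Q1Density} together with the mean/variance computations \eqref{eq:MeanBessel}--\eqref{eq:VarianceBessel} identifies the $\mathbb{Q}_1$-distribution of $Y_T$ as the asymmetric Bessel $Bessel^1(\mu T+\mu_1 T,\tilde{\sigma}_1^2 T,\lambda T)$, with $\mu_1$ and $\tilde{\sigma}_1^2$ as stated in the proposition. Hence $(Y_T-\mu T-\mu_1 T)/(\tilde{\sigma}_1\sqrt{T})$ has CDF $B_{\lambda T}^1$ under $\mathbb{Q}_1$. Standardising $\mathbb{Q}_1(Y_T>\ln(K/S_0))$ accordingly yields $1-B_{\lambda T}^1\bigl(-(\ln(S_0/K)+\mu T+\mu_1 T)/(\tilde{\sigma}_1\sqrt{T})\bigr)$, which is the first bracket in \eqref{eq:ExactFormula}. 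Note that here there is no symmetry to exploit, which is why the first term of the formula carries the extra $\mu_1 T$ shift while the second term does not.

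Assembling the two pieces in \eqref{eq:OP1} gives \eqref{eq:ExactFormula} verbatim. There is no genuine obstacle left: all distributional identifications have already been performed in Theorem \ref{Condition1} and Proposition \ref{Q1Density}, and the argument is essentially bookkeeping. The only place where one must be careful is to remember that under $\mathbb{Q}_1$ the mean of $Y_T$ is $\mu T + \mu_1 T$ (not $\mu T$), which is precisely what produces the asymmetric correction $\mu_1 T$ in the first argument of $B^1_{\lambda T}$ and explains why the generic formula \eqref{eq:EF2} must be re-centred before applying the standardised CDF.
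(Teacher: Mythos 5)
Your proposal is correct and follows exactly the route the paper takes: the paper states this proposition without a separate proof because it is the direct assembly of the change-of-num\'eraire formula \eqref{eq:OP1}/\eqref{eq:EF2} with the distributional identifications $Y_T\sim Bessel(\mu T,\tilde{\sigma}^2T,\lambda T)$ under $\mathbb{Q}$ and $Y_T\sim Bessel^1(\mu T+\mu_1T,\tilde{\sigma}_1^2T,\lambda T)$ under $\mathbb{Q}_1$ established just before it. Your remark about re-centring by $\mu_1T$ before applying $B^1_{\lambda T}$ is exactly the right bookkeeping point, and it matches the paper's definition of $B^1_{\lambda T}$ as the CDF of $(Y_t-\mu t-\mu_1 t)/(\tilde{\sigma}_1\sqrt{t})$.
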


\begin{remark}
An option pricing formula for a general VG process was given in
\cite{MadanMilne91} and \cite{MadanCarrChang98} (eq.25).
  While \cite{MadanMilne91} presented the formula as a double integral of elementary functions
  and obtained the price by numerical integration, \cite{MadanCarrChang98} provided a closed form formula in
  terms of the special functions   involving the modified Bessel function of the second kind
   and the degenerate hypergeometric function.  For
   the symmetric case the formula is much simpler.
\end{remark}
\begin{remark}\label{notgood}
The shortcoming of the natural EMM approach in continuous time is
that $\mu<r$. This is overcome by using discrete time, where natural
EMM exists also when $\mu\ge r$, and is given in the next section.
\end{remark}

\subsubsection{Discrete Time}

Let now $S_N = S_0 e^{Y_N}$ be a model for stock prices, where $Y_N
= \sum_{n = 1}^{N} \Delta Y_n$ is a symmetric VG process in discrete
time.   $\Delta Y_n$, $n = 1, \ldots, N$ are i.i.d. symmetric Bessel
distribution $Bessel(\mu, \sigma^2, \lambda)$, which belongs to the
symmetric family $S(\mu, \sigma^2, \psi)$, where $\psi$ is given in
(\ref{eq:CharGenBessel}).
%Therefore, for any time $N$, the distribution of $Y_N$ is $Bessel(\mu N, \sigma^2 N, \lambda N)$ which belongs $S(\mu N, \sigma^2N, \psi_N)$ where
%    \begin{align}
%^    \psi_N(v) = \big[\psi(v/N)\big]^N = \bigg(\frac{1}{1 + \frac{v}{\lambda N}} \bigg)^{\lambda N}. \label{eq:SVG_CGN}
%    \end{align}

It is possible to chose  both $\mathbb{Q}$ and $\mathbb{Q}_1$ as natural EMM's that
shift only the location parameter. Hence, the $\mathbb{Q}$-distribution of
$\Delta Y_n$ is $Bessel(\tilde{\mu}, \sigma^2, \lambda)$ with
$\tilde{\mu} = r - \ln\psi\big(-\frac{\sigma^2}{2}\big)$. The
$\mathbb{Q}_1$-distribution of $\Delta Y_n$ is $Bessel(\tilde{\mu}_1,
\sigma^2, \lambda)$ with $\tilde{\mu}_1 = r +
\ln\psi\big(-\frac{\sigma^2}{2}\big)$. This is easily seen, see also
\cite{KlebanerLandsman07}. Denote by $B_{\lambda N}(y)$ the
cumulative distribution function of the standardized symmetric
Bessel random variable $\frac{Y_N - \mu N}{\sigma \sqrt{N}}$.

\begin{proposition} Let   $\Delta Y_n$ follow a symmetric Bessel distribution $Bessel(\mu, \sigma^2, \lambda)$, then the arbitrage-free price of a call option with $N$ periods to expiration is given by
    \begin{align}
    C_0 &= S_0B_{\lambda N}\left(\frac{\ln\big(\frac{S_0}{K}\big) + \big(r - \lambda \ln(1 - \frac{\sigma^2}{2\lambda})\big)N}{\sigma \sqrt{N}} \right) \nonumber\\
     & \hspace{10mm} - e^{-rN}KB_{\lambda N}\left(\frac{\ln\big(\frac{S_0}{K}\big) + \big(r + \lambda \ln(1 - \frac{\sigma^2}{2\lambda})\big)N}{\sigma \sqrt{N}} \right). \label{eq:ExactOptionDVG}
    \end{align}
\end{proposition}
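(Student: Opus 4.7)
The plan is to start from the change-of-num\'eraire decomposition
\[
C_0 = S_0\, \mathbb{Q}_1(S_N>K) - e^{-rN}K\,\mathbb{Q}(S_N>K)
\]
and reduce each probability to an evaluation of $B_{\lambda N}$, after identifying the $\mathbb{Q}$- and $\mathbb{Q}_1$-distributions of $Y_N$ explicitly.

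First I would specialize the natural-EMM location shifts to the Bessel family by computing $\ln\psi(-\sigma^2/2) = -\lambda\ln\bigl(1-\tfrac{\sigma^2}{2\lambda}\bigr)$ from \eqref{eq:CharGenBessel}. Combined with the discrete-time remark following Theorem \ref{Condition1} (natural EMM exists, is unique, and shifts only the location while keeping $\sigma^2$ and $\lambda$), this yields for each increment $\Delta Y_n$ the $\mathbb{Q}$-law $Bessel(\tilde\mu,\sigma^2,\lambda)$ with $\tilde\mu = r+\lambda\ln(1-\sigma^2/(2\lambda))$ and the $\mathbb{Q}_1$-law $Bessel(\tilde\mu_1,\sigma^2,\lambda)$ with $\tilde\mu_1 = r-\lambda\ln(1-\sigma^2/(2\lambda))$. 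Since the Radon-Nikodym density $d\mathbb{Q}_1/d\mathbb{Q}=\prod_{n\leq N}e^{\Delta Y_n-r}$ factorises into a product of i.i.d.\ factors, the $\Delta Y_n$ remain i.i.d.\ under $\mathbb{Q}_1$, so Proposition \ref{VGBessel} (closure of the symmetric Bessel family under convolution of i.i.d.\ summands) gives $Y_N\sim Bessel(\tilde\mu N,\sigma^2 N,\lambda N)$ under $\mathbb{Q}$ and $Y_N\sim Bessel(\tilde\mu_1 N,\sigma^2 N,\lambda N)$ under $\mathbb{Q}_1$.

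Next I would standardize: $(Y_N-\tilde\mu N)/(\sigma\sqrt N)$ under $\mathbb{Q}$ and $(Y_N-\tilde\mu_1 N)/(\sigma\sqrt N)$ under $\mathbb{Q}_1$ both have distribution $Bessel(0,1,\lambda N)$, whose CDF is $B_{\lambda N}$ (the same function in either case, since the distribution of the standardized variable does not depend on the measure once the appropriate mean is subtracted). Rewriting $\{S_N>K\}=\{Y_N>\ln(K/S_0)\}$ and invoking the evenness identity $1-B_{\lambda N}(y)=B_{\lambda N}(-y)$ converts each probability into an expression of the form $B_{\lambda N}\!\bigl(\frac{\ln(S_0/K)+\tilde\mu_{\ast}N}{\sigma\sqrt N}\bigr)$ with $\tilde\mu_{\ast}\in\{\tilde\mu,\tilde\mu_1\}$. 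Plugging these into the change-of-num\'eraire formula and inserting the explicit values of $\tilde\mu$ and $\tilde\mu_1$ gives exactly \eqref{eq:ExactOptionDVG}.

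The main subtle point is verifying that $\mathbb{Q}_1$ itself sits inside the natural-EMM family. I would handle this as in Section \ref{S1}: the process $-Y$ is again a symmetric Bessel random walk with the same $\psi$, and imposing that $e^{rt}/S_t$ be a $\mathbb{Q}_1$-martingale is the martingale condition for $-Y$ under $\mathbb{Q}_1$, which by Theorem \ref{Condition1} applied to $-Y$ pins down the single admissible location shift $\tilde\mu_1 = r+\ln\psi(-\sigma^2/2)$; the discrete-time uniqueness of the natural EMM then forces $\mathbb{Q}_1$ to coincide with the measure produced by that shift. Everything else is bookkeeping with Bessel characteristic functions and the symmetry of $B_{\lambda N}$.
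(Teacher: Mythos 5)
Your overall route --- the change-of-num\'eraire decomposition \eqref{eq:OP1}, the computation $\ln\psi(-\sigma^2/2)=-\lambda\ln\bigl(1-\tfrac{\sigma^2}{2\lambda}\bigr)$, the resulting locations $\tilde\mu$ and $\tilde\mu_1$, and the final standardization and use of $1-B_{\lambda N}(y)=B_{\lambda N}(-y)$ --- is the same as the paper's (which offers no proof beyond the paragraph preceding the proposition, asserting the $\mathbb{Q}$- and $\mathbb{Q}_1$-laws with a pointer to \cite{KlebanerLandsman07}). The genuine gap is in your identification of the $\mathbb{Q}_1$-law of the increments. From $d\mathbb{Q}_1/d\mathbb{Q}=\prod_{n\le N}e^{\Delta Y_n-r}$ you correctly deduce that the $\Delta Y_n$ remain i.i.d.\ under $\mathbb{Q}_1$, but their common density is then the exponential tilt $e^{y-r}f^{\mathbb{Q}}_{\Delta Y}(y)$ of the $Bessel(\tilde\mu,\sigma^2,\lambda)$ density --- and the paper's own Proposition \ref{Q1Density} (read at $t=1$) shows that this tilt is an \emph{asymmetric} Bessel density, not $Bessel(\tilde\mu_1,\sigma^2,\lambda)$. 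So the conclusion ``hence $Y_N\sim Bessel(\tilde\mu_1 N,\sigma^2 N,\lambda N)$ under $\mathbb{Q}_1$'' does not follow from your argument; what actually follows is that $Y_N$ is asymmetric Bessel under the num\'eraire-change measure, exactly as in the continuous-time formula \eqref{eq:ExactFormula}. Indeed, asking that $f^{\mathbb{Q}}(y-\delta)/f^{\mathbb{Q}}(y)=e^{y-c}$ hold for all $y$ (which is what ``the tilt is a pure location shift within the family'' means) essentially forces $\ln f^{\mathbb{Q}}$ to be quadratic, i.e.\ the Gaussian case; the Bessel density, which is non-smooth at its center, cannot satisfy it.

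Your fallback argument is circular: Theorem \ref{Condition1} applied to $-Y$ tells you what the location under $\mathbb{Q}_1$ \emph{would} have to be \emph{if} $\mathbb{Q}_1$ were a natural (location-shift) measure, but it does not establish that the num\'eraire-change measure defined by \eqref{eq:DQ1} is of that form --- which is precisely the point in question. The paper avoids this by \emph{choosing} $\mathbb{Q}$ and $\mathbb{Q}_1$ as two separately specified location-shift measures in the discrete-time setting (where EMMs are plentiful), rather than deriving $\mathbb{Q}_1$ from $\mathbb{Q}$ via \eqref{eq:DQ1}. A complete proof must either follow that convention and explain in what sense the resulting expression is an arbitrage-free price, or else keep $\mathbb{Q}_1$ as the num\'eraire change of $\mathbb{Q}$ and accept an asymmetric Bessel distribution function in the first term, as in the continuous-time case.
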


\subsubsection{Numerical Comparisons}

For comparisons, we approximate the distributions of the standardized Bessel random variables $(Y_T - \mu T)/\sigma \sqrt{T}$ (the time $T$ is replaced by $N$ in the discrete case) by the standard Normal, in other words, $B_{\lambda T}$ by $\Phi$. We also approximate the standardized asymmetric Bessel random variable that arises in the continuous time case by the standard Normal, because its distribution is only slightly negatively skewed and therefore it is negligible. We will assume this is the case (skewness is small) in our approximation. Moreover, recall that the shape parameter $\lambda$ and the excess kurtosis $\gamma$ of the symmetric Bessel distribution are related by $\lambda = \frac{3}{\gamma}$. Thus, for each of the continuous time and discrete time cases, we obtain an easy to use Black-Scholes type formula for option pricing which gives correction that accounts for the access kurtosis.

In the continuous time case, the generalized or modified Black-Scholes formula for log-symmetric VG model (VG-C) is given by
%    \begin{align}
%    C &= S_0\Phi\left(\frac{\ln\big(\frac{S_0}{K}\big) + \mu T + \frac{6}{\gamma}\big(e^{(r - \mu)\gamma/3} - 1\big)T}{\sqrt{\frac{6}{\gamma}\big(e^{(r - \mu)\gamma/3} - 1\big)\big(2e^{(r - \mu)\gamma/3} - 1\big)T}} \right) \nonumber\\
%    &\hspace{30mm} - e^{-rT}K\Phi\left(\frac{\ln\big(\frac{S_0}{K}\big) + \mu T}{\sqrt{\frac{6}{\gamma}(1 - e^{-(r - \mu)\gamma/3})T}} \right). \label{eq:MBSC}
%    \end{align}
    \begin{align}
    C_0 \approx S_0\Phi\left(\frac{\ln\big(\frac{S_0}{K}\big) + \mu T + \mu_1 T}{\tilde{\sigma}_1 \sqrt{T}} \right)
    - e^{-rT}K\Phi\left(\frac{\ln\big(\frac{S_0}{K}\big) + \mu T}{\tilde{\sigma}\sqrt{T}} \right), \label{eq:VGC}
    \end{align}
where $\mu_1 = \frac{6}{\gamma}\big(e^{(r - \mu)\gamma/3} - 1\big)$, $\tilde{\sigma}_1^2 = \frac{6}{\gamma}\big(e^{(r - \mu)\gamma/3} - 1\big)\big(2e^{(r - \mu)\gamma/3} - 1\big)$ and $\tilde{\sigma}^2 = \frac{6}{\gamma}(1 - e^{-(r - \mu)\gamma/3})$.
Note that the Black-Scholes formula is a special case of the generalized version (VG-C) (\ref{eq:VGC}) when $\gamma \rightarrow 0$ due to the followings:
    \begin{align*}
    \mu_1 &= \frac{6}{\gamma}\big(e^{(r - \mu)\gamma/3} - 1\big) \rightarrow 2(r - \mu),\\
    \tilde{\sigma}_1^2 &= \frac{6}{\gamma}\big(e^{(r - \mu)\gamma/3} - 1\big)\big(2e^{(r - \mu)\gamma/3} - 1\big) \rightarrow 2(r - \mu),\\
    \tilde{\sigma}^2 &= \frac{6}{\gamma}\big(1 - e^{-(r - \mu)\gamma/3} \big) \rightarrow 2(r - \mu).
    \end{align*}
And if $2(r - \mu) = \sigma^2$, which is a constant as in the Black-Scholes model (Recall that under the risk-neutral measure $\mathbb{Q}$, the mean $\mu = r - \frac{\sigma^2}{2}$ and the volatility $\sigma$ is a constant), then by using these results and some simple manipulations, it is easy to see that the generalized formula (VG-C) (\ref{eq:VGC}) is the exact Black-Scholes formula.

In the discrete time case, the modified Black-Scholes formula for log-symmetric VG model (VG-D) is given by
    \begin{align}
    C_0 &\approx S_0\Phi\left(\frac{\ln\big(\frac{S_0}{K}\big) + \big(r - \frac{3}{\gamma} \ln(1 - \frac{\gamma\sigma^2}{6})\big)N}{\sigma \sqrt{N}} \right) \nonumber\\
     & \hspace{10mm} - e^{-rN}K\Phi\left(\frac{\ln\big(\frac{S_0}{K}\big) + \big(r + \frac{3}{\gamma} \ln(1 - \frac{\gamma\sigma^2}{6})\big)N}{\sigma \sqrt{N}} \right). \label{eq:VGD}
    \end{align}
It can be seen that the Black-Scholes formula is a limit of the generalized version (VG-D) (\ref{eq:VGD}) for every $N$ when $\gamma \rightarrow 0$ due to
    \[
    \frac{3}{\gamma} \ln\Big(1 - \frac{\gamma\sigma^2}{6}\Big) \rightarrow -\frac{\sigma^2}{2}.
    \]

The classical Black-Scholes formula (BS) is considered robust in the sense that for small values of excess kurtosis $\gamma$, it coincides with the modified Black-Scholes formulae in both continuous time and discrete time cases. However, even for the moderate values of $\gamma$, the distinction between the modified Black-Scholes formulae (VG-C and VG-D) and BS is noticeable (see Figure \ref{Fig1}), with the disagreement between VG-C and BS formulae being greater than the disagreement between VG-D and BS. The exact prices and percentage differences are represented in Table \ref{tab:tab1}.

\begin{figure}[!htb]
    \centering
    \includegraphics [bb=60 80 720 510, scale=0.5]{graph1_VG.eps}
    \caption{Option prices and percentage differences obtained by VG-C, VG-D and BS formulae for log-Bessel distribution weekly returns, $S_0 = K = 10$, $r = 0.06$, $\sigma = 0.19$, $\mu = 0.03$, $\gamma = 4$}
    \label{Fig1}
\end{figure}

\begin{table}[!htb]
\centering
\begin{tabular}{|l|c|c|c|c|c|c|}
  \hline
  % after \\: \hline or \cline{col1-col2} \cline{col3-col4} ...

  \begin{minipage}{3cm}
  Time to maturity (weeks)
  \end{minipage}
  & 2 & 12 & 22 & 32 & 42 & 52 \\
  \hline \hline
  BS formula & 0.160 & 0.434 & 0.622 & 0.782 & 0.927 & 1.062 \\
  \hline
  VG-D formula & 0.162 & 0.439 & 0.628 & 0.789 & 0.935 & 1.071 \\
  Percentage difference & 1.13 & 1.01 & 0.94 & 0.88 & 0.84 & 0.80 \\
  \hline
  VG-C formula & 0.192 & 0.511 & 0.725 & 0.904 & 1.065 & 1.213 \\
  Percentage difference & 19.85 & 17.67 & 16.46 & 15.55 & 14.81 & 14.17 \\
  \hline
\end{tabular}
\caption{Option prices and percentage differences obtained by VG-C, VG-D and BS formulae for log-Bessel distribution weekly returns, $S_0 = K = 10$, $r = 0.06$, $\sigma = 0.19$, $\mu = 0.03$, $\gamma = 4$}
\label{tab:tab1}
\end{table}

\begin{remark}
We suggest to use the modified formulae for any distribution with a
positive excess kurtosis.
\end{remark}

%An option pricing formula for VG process was given in \cite{MadanMilne91} and \cite{MadanCarrChang98} (eq. 25), where a non-symmetric case was also included. They derived it by analytical methods and used the Normal density function integrated with respect to a gamma density. While \cite{MadanMilne91} presented the formula as a double integral of elementary functions and obtained the price by numerical integration, \cite{MadanCarrChang98} provided a closed form formula in terms of the special functions of mathematics involving the modified Bessel function of the second kind and the degenerate hypergeometric function. Our approach is purely probabilistic and identifies their integrated function as the Bessel function distribution. \cite{MadanMilne91} gave numerical results demonstrating influence of kurtosis. Since their formula requires numerical integration, the suggested approximations present a good alternative.

\section{Normal Inverse Gaussian Model}

\subsection{The NIG Process and Distribution}
\label{SNIGP}

In this section, the model is $S_t=S_0e^{Y_t}$, where $Y_t$ is a Normal inverse
Gaussian (NIG) process. The NIG distributions were first introduced
by Barndorff-Nielsen \cite{BarndorffNielsen95} as a subclass of the
Generalized Hyperbolic distribution with parameter $\lambda =
-\frac{1}{2}$. Denote by $NIG(\alpha, \beta, \delta, \mu)$ the NIG
distribution, where $\mu$ is the location parameter, $\delta$ is the
scale parameter, $\alpha$ is the shape parameter and $\beta$ is for
skewness. The density of NIG distribution is given by (see e.g.
\cite{Schoutens03} p.60, or \cite{Rydberg97})
    \begin{equation}
    f_Y(y) = \frac{\alpha}{\pi}e^{\delta\sqrt{\alpha^2 - \beta^2} + \beta(y - \mu)}\frac{K_1\Big(\alpha\delta\sqrt{1 + (\frac{y - \mu}{\delta})^2}\Big)}{\sqrt{1 + (\frac{y - \mu}{\delta})^2}}, \label{eq:NIGDensity}
    \end{equation}
where $K_1$ is the modified Bessel function of the third kind, $y,\mu \in \mathbb{R}$, $\delta \geq 0$ and $0 \leq |\beta| \leq \alpha$. The characteristic function of NIG distribution is
    \begin{equation}
    \varphi (u) = e^{iu\mu}\frac{e^{\delta\sqrt{\alpha^2 - \beta^2}}}{e^{\delta\sqrt{\alpha^2 - (\beta + iu)^2}}}, \label{eq:NIG_CF}
    \end{equation}
and the mean, variance, skewness and kurtosis of NIG distribution are
    \begin{align}
    \mbox{Mean} &= \mu + \frac{\beta \delta}{\sqrt{\alpha^2 - \beta^2}} \label{eq:NIGMean}\\
    \mbox{Variance} &= \frac{\delta\alpha^2}{\big(\sqrt{\alpha^2 - \beta^2}\big)^3}. \label{eq:NIGVariance}\\
    \mbox{Skewness} &= \frac{3\beta}{\alpha\big(\delta\sqrt{\alpha^2 - \beta^2}\big)^{\frac{1}{2}}} \label{eq:NIGSkew}\\
    \mbox{Kurtosis} &= 3\bigg(1 + \frac{\alpha^2 + 4\beta^2}{\delta\alpha^2\sqrt{\alpha^2 - \beta^2}}\bigg). \label{eq:NIGKurtosis}
    \end{align}
%The NIG distribution is infinitely divisible and closed under convolution \cite{BarndorffNielsen97}. Therefore, (by L\'{e}vy-Khintchine representation) a L\'evy process taken along time intervals of length 1 with $Y_t - Y_{t - 1} \overset{D}= Y_1 \sim NIG(\alpha, \beta, \delta, \mu)$ generates a L\'evy process $Y_t \sim NIG(\alpha, \beta, \delta t,\mu t)$. Note that $\alpha$ and $\beta$ are invariant under time evolution.

The symmetric NIG L\'evy processes have symmetric NIG marginals. The NIG distribution is symmetric when the skewness parameter $\beta = 0$. In this case, the density \eqref{eq:NIGDensity} of a symmetric NIG is
    \begin{align}
    f_Y(y) = \frac{\alpha}{\pi}e^{\alpha\delta} \frac{K_1\Big(\alpha\delta\sqrt{1 + (\frac{y - \mu}{\delta})^2}\Big)}{\sqrt{1 + (\frac{y - \mu}{\delta})^2}}.     \label{eq:SNIG_PDF}
    \end{align}
The characteristic function \eqref{eq:NIG_CF} for symmetric NIG is
    \begin{align}
    \varphi (u) = e^{iu\mu}e^{\alpha\delta\big(1 - \sqrt{1 + (\frac{u}{\alpha})^2}\big)}.
    \end{align}
It follows from equations \eqref{eq:NIGMean}, \eqref{eq:NIGVariance} and \eqref{eq:NIGKurtosis}, respectively, that $\mu$ is the mean, variance is $\frac{\delta}{\alpha}$ and kurtosis is $3 + \frac{3}{\alpha\delta}$. We will denote the distribution of symmetric NIG by $SNIG(\alpha, 0, \delta, \mu)$.

 By an inspection of the characteristic function we can see that the characteristic generator of symmetric Normal
inverse Gaussian distributions is given by
    \begin{equation}\label{SNIG_CG}
    \psi(v) = e^{\zeta\big(1 - \sqrt{1 + \frac{2v}{\zeta}}\big)},
    \end{equation}
where $\zeta = \alpha\delta$.

\subsection{Option Pricing with Symmetric NIG Model}
\subsubsection{Continuous Time}

We   determine the distributions of $Y_t$ under the EMM's $\mathbb{Q}$ and
$\mathbb{Q}_1$.

If $\mu < r$ there is a natural EMM and $Y_1$ is symmetric NIG
distribution $SNIG(\alpha, 0, \tilde{\delta}, \mu)$, where
$\frac{\tilde{\delta}}{\alpha} = \tilde{\sigma}^2$. where by using
\eqref{SNIG_CG} we get from \eqref{NEMM2}
    \begin{align}
    \tilde{\sigma}^2 = 2(r - \mu) - \left(\frac{r - \mu}{\alpha\sigma}\right)^2, \hspace{10mm} . \label{eq:NewVar}
    \end{align}
Consequently, $Y_t \sim SNIG(\alpha, 0, \tilde{\delta}t, \mu t)$ under $\mathbb{Q}$ with mean $\mu t$ and variance $\tilde{\sigma}^2t = \tilde{\delta}t/\alpha$.

Under $\mathbb{Q}_1$, the distribution of $Y_t$ is identified in the following theorem.

\begin{proposition}\label{DensityZ1}
The density of $Y_t$ under $\mathbb{Q}_1$, given by $e^{y -
rt}f_{Y_t}^{\mathbb{Q}}(y)$, is the density function of an asymmetric NIG
distribution, $Y_t \sim NIG(\alpha, 1, \tilde{\delta}t, \mu t)$.
\end{proposition}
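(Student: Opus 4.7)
The plan is to directly compute $e^{y-rt}f_{Y_t}^{\mathbb{Q}}(y)$ using the explicit symmetric NIG density \eqref{eq:SNIG_PDF} and identify the result with the asymmetric NIG density \eqref{eq:NIGDensity} with skewness parameter $\beta=1$. Since $Y_t$ is a NIG L\'evy process, its time-$t$ marginal under $\mathbb{Q}$ is $SNIG(\alpha,0,\tilde{\delta}t,\mu t)$; writing $\tilde{\zeta}=\alpha\tilde{\delta}$ and using \eqref{eq:SNIG_PDF} with $(\delta,\mu)\mapsto(\tilde{\delta}t,\mu t)$ gives
\[
f_{Y_t}^{\mathbb{Q}}(y)=\frac{\alpha}{\pi}\,e^{\alpha\tilde{\delta}t}\,\frac{K_{1}\!\big(\alpha\tilde{\delta}t\sqrt{1+((y-\mu t)/(\tilde{\delta}t))^{2}}\big)}{\sqrt{1+((y-\mu t)/(\tilde{\delta}t))^{2}}}.
\]
Multiplying by $e^{y-rt}$ does not touch the Bessel kernel, only the exponential prefactor, so the candidate asymmetric NIG density of parameters $(\alpha,1,\tilde{\delta}t,\mu t)$ from \eqref{eq:NIGDensity} has the same denominator and the same $K_1$-factor as $f_{Y_t}^{\mathbb{Q}}(y)$, and the proposition reduces to verifying the prefactor identity
\[
\alpha\tilde{\delta}t+y-rt\;=\;\tilde{\delta}t\sqrt{\alpha^{2}-1}+(y-\mu t),
\]
equivalently
\[
r-\mu\;=\;\tilde{\delta}\bigl(\alpha-\sqrt{\alpha^{2}-1}\bigr).
\]

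To establish this identity I would invoke the martingale equation \eqref{NEMM2} characterising the natural EMM when $c=0$, together with the symmetric NIG characteristic generator \eqref{SNIG_CG}. Under $\mathbb{Q}$ the scale is $\tilde{\delta}$, so $\tilde{\zeta}=\alpha\tilde{\delta}$, and $\tilde{\sigma}^{2}=\tilde{\delta}/\alpha$ gives $\tilde{\sigma}^{2}/\tilde{\zeta}=1/\alpha^{2}$; substituting into \eqref{SNIG_CG} yields
\[
\ln\psi\!\bigl(-\tilde{\sigma}^{2}/2\bigr)=\alpha\tilde{\delta}\Bigl(1-\sqrt{1-1/\alpha^{2}}\Bigr)=\tilde{\delta}\bigl(\alpha-\sqrt{\alpha^{2}-1}\bigr),
\]
which by \eqref{NEMM2} equals $r-\mu$, exactly what is needed. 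Note in passing that this also recovers formula \eqref{eq:NewVar} for $\tilde{\sigma}^{2}$, since solving $r-\mu=\tilde{\delta}(\alpha-\sqrt{\alpha^{2}-1})$ for $\tilde{\delta}$ and using $\tilde{\sigma}^{2}=\tilde{\delta}/\alpha$ returns $\tilde{\sigma}^{2}=2(r-\mu)-((r-\mu)/(\alpha\sigma))^{2}$ after the usual rationalisation.

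The proof therefore boils down to algebra once the NIG density and the EMM condition are lined up. The only real obstacle is bookkeeping: keeping the parameters $(\alpha,\tilde{\delta}t,\mu t)$ straight through the $t$-scaling, and recognising that the quantity $\alpha\tilde{\delta}-\tilde{\delta}\sqrt{\alpha^{2}-1}$ appearing in the exponent of the asymmetric NIG density (after collecting the $\delta\sqrt{\alpha^{2}-\beta^{2}}$ term with $\beta=1$) is precisely $\ln\psi(-\tilde{\sigma}^{2}/2)$. Once this identification is made, the density of $Y_{t}$ under $\mathbb{Q}_{1}$ is exhibited in closed form and the shape $NIG(\alpha,1,\tilde{\delta}t,\mu t)$ is read off immediately from \eqref{eq:NIGDensity}.
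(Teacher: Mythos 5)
Your proposal is correct and follows essentially the same route as the paper: multiply the symmetric NIG density by $e^{y-rt}$, use the martingale condition \eqref{NEMM2} together with the generator \eqref{SNIG_CG} to rewrite $r-\mu$ as $\tilde{\delta}\bigl(\alpha-\sqrt{\alpha^{2}-1}\bigr)$, and match the resulting exponential prefactor with that of the $NIG(\alpha,1,\tilde{\delta}t,\mu t)$ density \eqref{eq:NIGDensity}. You are in fact slightly more explicit than the paper, which performs the same substitution without spelling out where the identity comes from.
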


\begin{proof}
Using \eqref{NEMM2} and \eqref{eq:DQ1}
   % \[
%    \ln \psi_t\Big(-\frac{\tilde{\sigma}^2t}{2}\Big) = rt - \mu t.
%    \]
%Using the characteristic generator \eqref{eq:SNIG_CGt} for the symmetric NIG distribution of $Y_t$, we can write
%    \begin{align*}
%    rt &= \mu t + \ln \psi_t\Big(-\frac{\tilde{\sigma}^2t}{2}\Big) \nonumber \\
%    &= \mu t + \alpha\tilde{\delta} t\left(1 - \sqrt{1 - \frac{\tilde{\sigma}^2}{\alpha\tilde{\delta}}}\right)\nonumber\\
%    &= \mu t + \alpha\tilde{\delta} t\left(1 - \sqrt{1 - \frac{\tilde{\delta}^2}{(\alpha\tilde{\delta})^2}}\right)\\
%    &= \mu t + \alpha\tilde{\delta} t - \tilde{\delta}t\sqrt{\alpha^2 - 1},
%    \end{align*}
%in which we have applied $\tilde{\sigma}^2 = \frac{\tilde{\delta}}{\alpha} = \frac{\tilde{\delta}^2}{\alpha\tilde{\delta}}$.
 \begin{align}
    e^{y - rt} f_{Y_t}^{\mathbb{Q}}(y)
    &= e^{y - \mu t - \alpha\tilde{\delta} t + \tilde{\delta}t\sqrt{\alpha^2 - 1}}
    \frac{\alpha}{\pi}e^{\alpha\tilde{\delta}t} \frac{K_1\Big(\alpha\tilde{\delta}t\sqrt{1 + \big(\frac{y - \mu t}{\tilde{\delta}t}\big)^2}\Big)}{\sqrt{1 + \big(\frac{y - \mu t}{\tilde{\delta}t}\big)^2}} \nonumber\\
    &= \frac{\alpha}{\pi}e^{y-\mu t + \tilde{\delta}t \sqrt{\alpha^2 - 1}} \frac{K_1\Big(\alpha\tilde{\delta}t\sqrt{1 + \big(\frac{y - \mu t}{\tilde{\delta}t}\big)^2}\Big)}{\sqrt{1 + \big(\frac{y - \mu t}{\tilde{\delta}t}\big)^2}}. \label{eq:ADen1}
    \end{align}
One can verify that \eqref{eq:ADen1} is the density function of an asymmetric NIG distribution (see \eqref{eq:NIGDensity}) with parameters $\alpha$ (unchange), $\beta = 1$, $\mu = \mu t$ and $\delta = \tilde{\delta}t.$

\end{proof}

The mean and variance are given by \eqref{eq:NIGMean} and
\eqref{eq:NIGVariance}, respectively.
    \begin{align}
    \mathbb{E}[Y_1] &= \mu_1 = \mu + \sqrt{\frac{\alpha^2}{\alpha^2 - 1}}\tilde{\sigma}^2, \label{eq:Mu1}\\
    Var(Y_1) &= \tilde{\sigma}_1^2 = \left(\sqrt{\frac{\alpha^2}{\alpha^2 - 1}}\right)^3\tilde{\sigma}^2, \label{eq:S1}
    \end{align}
where $\tilde{\sigma}^2$ is given by \eqref{eq:NewVar}.

Denote by $F_{SNIG}(y)$, the cumulative distribution function of the standardized symmetric NIG random variable $\frac{Y_t - \mu t}{\tilde{\sigma}\sqrt{t}} \sim SNIG(\alpha, 0, \alpha, 0)$  under $\mathbb{Q}$, and denote by $F_{NIG}(y)$ the cumulative distribution function of the standardized asymmetric NIG random variable $\frac{Y_t - \mu_1 t}{\tilde{\sigma}_1\sqrt{t}} \sim NIG(\alpha, 1, \alpha, 0)$ under $\mathbb{Q}_1$, we obtain the explicit formula for option pricing with symmetric NIG process that can be written in terms of the cumulative distribution functions of standardized (symmetric and asymmetric) NIG.

\begin{proposition}
Let   $Y_t \sim SNIG(\alpha, 0, \delta t, \mu t)$, and $\mu<r$. Then
the arbitrage-free price of a call option using natural EMM is given
by
    \begin{align}
    C_0 &= S_0\left[1 - F_{NIG}\left(-\frac{\ln\big(\frac{S_0}{K}\big) + \mu_1 T}{\tilde{\sigma}_1\sqrt{T}} \right)\right] \nonumber\\
    &\hspace{30mm} - e^{-rT}KF_{SNIG}\left(\frac{\ln\big(\frac{S_0}{K}\big) + \mu T}{\tilde{\sigma}\sqrt{T}} \right). \label{eq:Exact_NIG}
    \end{align}
where $\mu_1 = \mu + \sqrt{\frac{\alpha^2}{\alpha^2 - 1}}\tilde{\sigma}^2$, $\tilde{\sigma}_1^2 = \left(\sqrt{\frac{\alpha^2}{\alpha^2 - 1}}\right)^3\tilde{\sigma}^2$ and $\tilde{\sigma}^2 = 2(r - \mu) - \left(\frac{r - \mu}{\alpha\sigma}\right)^2$.

%where $\mu_1 = \mu + \frac{\alpha\tilde{\sigma}^2}{\sqrt{\alpha^2 - 1}}$, $\tilde{\sigma}_1^2 = \frac{\alpha^3\tilde{\sigma}^2}{\big(\sqrt{\alpha^2 - 1}\big)^3}$ and $\tilde{\sigma} = \big(\frac{r - \mu}{\alpha^2 - \alpha\sqrt{\alpha^2 - 1}}\big)$.
\end{proposition}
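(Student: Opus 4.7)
The plan is to assemble the option-pricing identity from the ingredients already established: the change-of-numéraire representation \eqref{eq:EF2}, the distributional identification of $Y_T$ under $\mathbb{Q}$ (symmetric NIG) from the discussion preceding Proposition \ref{DensityZ1}, and the distributional identification of $Y_T$ under $\mathbb{Q}_1$ (asymmetric NIG with $\beta=1$) from Proposition \ref{DensityZ1} itself. Since $\mu<r$, Theorem \ref{Condition1}(3) guarantees that the natural EMM $\mathbb{Q}$ exists and is unique, with the variance parameter $\tilde{\sigma}^2$ as in \eqref{eq:NewVar}; by $\mathbb{Q}$-infinite divisibility (equation \eqref{InftyDevisible} applied to the symmetric NIG characteristic function) we have $Y_T\sim SNIG(\alpha,0,\tilde{\delta} T,\mu T)$, so $(Y_T-\mu T)/(\tilde{\sigma}\sqrt{T})\sim SNIG(\alpha,0,\alpha,0)$, which by definition has CDF $F_{SNIG}$.

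Starting from \eqref{eq:EF2}, I would first rewrite the event $\{S_T>K\}=\{Y_T>\ln(K/S_0)\}$, standardize under $\mathbb{Q}$, and use the symmetry of $F_{SNIG}$ (a consequence of $\beta=0$ in \eqref{eq:SNIG_PDF}, which gives $1-F_{SNIG}(a)=F_{SNIG}(-a)$) to obtain
\begin{equation*}
\mathbb{Q}(S_T>K)=F_{SNIG}\!\left(\frac{\ln(S_0/K)+\mu T}{\tilde{\sigma}\sqrt{T}}\right),
\end{equation*}
which is exactly the second bracketed term in the claimed formula. For the first term, I would apply Proposition \ref{DensityZ1} to conclude that under $\mathbb{Q}_1$, $Y_T\sim NIG(\alpha,1,\tilde{\delta} T,\mu T)$ with mean $\mu_1 T$ and variance $\tilde{\sigma}_1^2 T$ as given by \eqref{eq:Mu1}--\eqref{eq:S1}; hence $(Y_T-\mu_1 T)/(\tilde{\sigma}_1\sqrt{T})$ has CDF $F_{NIG}$. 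Standardizing the event $\{Y_T>\ln(K/S_0)\}$ under $\mathbb{Q}_1$ then gives
\begin{equation*}
\mathbb{Q}_1(S_T>K)=1-F_{NIG}\!\left(-\frac{\ln(S_0/K)+\mu_1 T}{\tilde{\sigma}_1\sqrt{T}}\right);
\end{equation*}
note that, unlike the symmetric case, no further simplification via symmetry is available because $\beta=1\neq 0$, which is precisely why the answer is written in the $1-F_{NIG}(-\,\cdot\,)$ form.

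Substituting both expressions into \eqref{eq:EF2} yields \eqref{eq:Exact_NIG}. There is no real obstacle: the uniqueness of $\mathbb{Q}$ has been disposed of upstream, the nontrivial distributional identifications were done in Proposition \ref{DensityZ1} and the preceding paragraph, and the remaining work is bookkeeping plus one invocation of symmetry. The only point requiring a modicum of care is making sure the time-$T$ parameters $\tilde{\sigma}^2 T$, $\tilde{\sigma}_1^2 T$, $\mu T$, $\mu_1 T$ cancel correctly under standardization — which they do because infinite divisibility of the (symmetric and asymmetric) NIG family propagates the parameter structure $(\alpha,\beta,\delta t,\mu t)$ along the process.
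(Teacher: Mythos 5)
Your proposal is correct and follows essentially the same route as the paper, which states this proposition without a separate proof precisely because it is the assembly you describe: the change-of-num\'eraire formula \eqref{eq:EF2} combined with the identification of the $\mathbb{Q}$-law of $Y_T$ as symmetric NIG with variance parameter \eqref{eq:NewVar} and the $\mathbb{Q}_1$-law as the asymmetric $NIG(\alpha,1,\tilde{\delta}t,\mu t)$ from Proposition \ref{DensityZ1}, with moments \eqref{eq:Mu1}--\eqref{eq:S1}. The standardization and the use of symmetry ($1-F_{SNIG}(a)=F_{SNIG}(-a)$) for the $\mathbb{Q}$-term, and the absence of such a simplification for the $\mathbb{Q}_1$-term, match the paper's intent exactly.
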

\noindent Discrete time model allows for natural EMM even when
$\mu\ge r$, Remark \ref{notgood}.

\subsubsection{Discrete Time}
The stock price process in discrete time $S_N = S_0e^{Y_N}$
where $Y_N = \sum_{n = 1}^N \Delta Y_n$ with $\Delta Y_n$, $n = 1,
\ldots N$ are i.i.d.   $SNIG(\alpha, 0, \delta, \mu)$, which belongs
to the symmetric family $S(\mu, \sigma^2, \psi)$ where $\sigma^2 =
\frac{\delta}{\alpha} = \frac{\delta^2}{\alpha\delta}$, and $\psi$
is   \eqref{SNIG_CG}.

Recall that for a fixed $\sigma^2$, we can obtain two natural EMM's
$\mathbb{Q}$ and $\mathbb{Q}_1$ by changing only the location parameter $\mu$ so that
$Y_N$ remains a symmetric NIG L\'evy process. The $\mathbb{Q}$-distribution of
$\Delta Y_n$ is $SNIG(\alpha, 0, \delta, \tilde{\mu})$ where
$\tilde{\mu} = r - \ln\psi\big(-\frac{\sigma^2}{2}\big)$, and the
$\mathbb{Q}_1$-distribution of $\Delta Y_n$ is $SNIG(\alpha, 0,\delta,
\tilde{\mu}_1)$ with $\tilde{\mu}_1 = r +
\ln\psi\big(-\frac{\sigma^2}{2}\big)$. By \eqref{SNIG_CG} and $\zeta
= \alpha\delta = \alpha^2\sigma^2$, we obtain
    \begin{align}
    \ln\psi\Big(-\frac{\sigma^2}{2}\Big) = \zeta \left(1 - \sqrt{1 - \frac{\sigma^2}{\zeta}}\right) = \alpha^2\sigma^2 - \alpha\sigma^2\sqrt{\alpha^2 - 1}. \label{eq:LnPsi2}
    \end{align}
%    \begin{align}
%    \ln\psi\Big(-\frac{\sigma^2}{2}\Big) = \alpha^2\sigma^2 - \alpha\sigma^2\sqrt{\alpha^2 - 1}. \label{eq:LnPsi}
%    \end{align}
Therefore, we obtain the following result for the exact option pricing formula with symmetric NIG process in discrete time.

\begin{proposition} Let   $\Delta Y_n$ follow a symmetric NIG distribution $SNIG(\alpha, 0, \delta, \mu)$, then the arbitrage-free price of a call option with $N$ periods to expiration is given by
    \begin{align}
    C_0 &= S_0F_{SNIG}\left(\frac{\ln\big(\frac{S_0}{K}\big) + \big(r + \alpha^2\sigma^2 - \alpha\sigma^2\sqrt{\alpha^2 - 1}\big)N}{\sigma \sqrt{N}} \right) \nonumber\\
     & \hspace{10mm} - e^{-rN}KF_{SNIG}\left(\frac{\ln\big(\frac{S_0}{K}\big) + \big(r - \alpha^2\sigma^2 + \alpha\sigma^2\sqrt{\alpha^2 - 1}\big)N}{\sigma \sqrt{N}} \right). \label{eq:ExactOptionDNIG}
    \end{align}
\end{proposition}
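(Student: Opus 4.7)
The plan is to apply the change-of-num\'eraire decomposition \eqref{eq:OP1}, that is $C_0 = S_0\mathbb{Q}_1(S_N>K) - e^{-rN}K\mathbb{Q}(S_N>K)$, and then express each of the two probabilities as a value of the same standardized symmetric-NIG c.d.f.\ $F_{SNIG}$.

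First I would pin down the $\mathbb{Q}$- and $\mathbb{Q}_1$-laws of $Y_N$. As noted in the Remark following Theorem~\ref{Condition1}, in discrete time the natural EMM always exists and is obtained by a pure location shift, so the construction of $\mathbb{Q}$ for $Y_N$ and of $\mathbb{Q}_1$ (applying the same construction to $-Y_N$, as in Section~\ref{S1}) gives $\Delta Y_n \sim SNIG(\alpha,0,\delta,\tilde\mu)$ under $\mathbb{Q}$ with $\tilde\mu = r - \ln\psi(-\sigma^2/2)$, and $\Delta Y_n \sim SNIG(\alpha,0,\delta,\tilde\mu_1)$ under $\mathbb{Q}_1$ with $\tilde\mu_1 = r + \ln\psi(-\sigma^2/2)$. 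Summing $N$ i.i.d.\ copies and using convolution-additivity of NIG in $(\mu,\delta)$ at fixed $\alpha$ gives $Y_N \sim SNIG(\alpha,0,\delta N,\tilde\mu N)$ under $\mathbb{Q}$ and $Y_N \sim SNIG(\alpha,0,\delta N,\tilde\mu_1 N)$ under $\mathbb{Q}_1$. Because $\sigma^2=\delta/\alpha$, the common variance of $Y_N$ is $\sigma^2 N$ under both measures, and so both standardizations $(Y_N-\tilde\mu N)/(\sigma\sqrt N)$ and $(Y_N-\tilde\mu_1 N)/(\sigma\sqrt N)$ share the c.d.f.\ $F_{SNIG}$.

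Next, I would rewrite $\{S_N>K\}=\{Y_N>\ln(K/S_0)\}$ and, using the symmetry of $F_{SNIG}$ about zero (so $1-F_{SNIG}(x)=F_{SNIG}(-x)$), conclude
\begin{align*}
\mathbb{Q}(S_N>K) &= F_{SNIG}\left(\frac{\ln(S_0/K) + \tilde\mu N}{\sigma\sqrt N}\right),\\
\mathbb{Q}_1(S_N>K) &= F_{SNIG}\left(\frac{\ln(S_0/K) + \tilde\mu_1 N}{\sigma\sqrt N}\right).
\end{align*}
It then only remains to substitute the closed form $\ln\psi(-\sigma^2/2) = \alpha^2\sigma^2-\alpha\sigma^2\sqrt{\alpha^2-1}$ from \eqref{eq:LnPsi2} (with $\zeta=\alpha\delta=\alpha^2\sigma^2$ in \eqref{SNIG_CG}) into $\tilde\mu$ and $\tilde\mu_1$, and reassemble via the change-of-num\'eraire formula; this yields \eqref{eq:ExactOptionDNIG}.

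I expect no substantive obstacle: the proof is essentially a bookkeeping assembly of Theorem~\ref{Condition1}, identity \eqref{eq:LnPsi2}, and the symmetry of $F_{SNIG}$. The only point worth double-checking is the legitimacy of $\mathbb{Q}_1$ as a natural EMM in the discrete-time setting, i.e.\ that the location shift $\tilde\mu_1 = r + \ln\psi(-\sigma^2/2)$ is exactly the value that makes $e^{rt}/S_t$ a $\mathbb{Q}_1$-martingale; this follows by applying Theorem~\ref{Natural} to $-Y_N$ and invoking the uniqueness clause to identify the shift.
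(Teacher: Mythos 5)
Your proposal is correct and follows essentially the same route as the paper, which states this proposition without further proof precisely because it is the bookkeeping assembly you describe: the location-shifted natural EMMs $\tilde\mu = r - \ln\psi(-\sigma^2/2)$ and $\tilde\mu_1 = r + \ln\psi(-\sigma^2/2)$ from the discrete-time construction, convolution stability of the symmetric NIG family, symmetry of $F_{SNIG}$, the change-of-num\'eraire decomposition \eqref{eq:OP1}, and substitution of \eqref{eq:LnPsi2}. The only cosmetic caveat is that the justification of the discrete-time natural EMMs rests on the discrete-time result cited in the Remark after Theorem~\ref{Condition1} (from \cite{KlebanerLandsman07}) rather than on Theorem~\ref{Natural}, which is stated for continuous-time L\'evy processes.
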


\subsection{Numerical Comparisons}

For comparisons, we approximate the standardized symmetric NIG distribution by the standard Normal, in other words, $F_{SNIG}$ by $\Phi$. We also approximate the standardized asymmetric NIG distribution that arises in the continuous time case by the standard Normal, i.e., $F_{NIG}$ by $\Phi$, since it is only slightly positively skewed. Therefore we assume that the skewness is negligible. Moreover, recall that the shape parameter $\zeta = \alpha\delta$ and the excess kurtosis $\gamma$ of the symmetric NIG distribution are related by $\gamma = \frac{3}{\zeta}$. Thus, for each of the continuous time and discrete time cases, we obtain an easy to use Black-Scholes type formula for option pricing which gives correction that accounts for the access kurtosis.

In the continuous time case, the generalized or modified Black-Scholes formula for the log-symmetric NIG model (NIG-C) is given by
    \begin{align}
    C_0 &\approx S_0\Phi\left(\frac{\ln\big(\frac{S_0}{K}\big) + \mu_1 T}{\tilde{\sigma}_1\sqrt{T}} \right)
    - e^{-rT}K\Phi\left(\frac{\ln\big(\frac{S_0}{K}\big) + \mu T}{\tilde{\sigma}\sqrt{T}} \right), \label{eq:NIGC}
    \end{align}
where $\mu_1 = \mu + \sqrt{\frac{3}{3 - \gamma\sigma^2}}\tilde{\sigma}^2$, $\tilde{\sigma}_1^2 = \left(\sqrt{\frac{3}{3 - \gamma\sigma^2}}\right)^3\tilde{\sigma}^2$ and $\tilde{\sigma}^2 = 2(r - \mu) - \frac{\gamma}{3}(r - \mu)^2$, in which we have applied the fact that
%    \begin{align*}
%    \tilde{\sigma}^2 = 2(r - \mu) - \left(\frac{r - \mu}{\alpha\sigma}\right)^2
%    & = 2(r - \mu) - \frac{(r - \mu)^2}{\alpha\delta}\\
%    & = 2(r - \mu) - \frac{\gamma}{3}(r - \mu)^2.
%    \end{align*}
    \begin{align*}
    \frac{\alpha^2}{\alpha^2 - 1} = \frac{\alpha^2\sigma^2}{\alpha^2\sigma^2 - \sigma^2} = \frac{\alpha\delta}{\alpha\delta - \sigma^2} = \frac{3}{3 - \gamma\sigma^2}.
    \end{align*}

Observe that, when $\gamma \rightarrow 0$, we have $\tilde{\sigma}^2 \rightarrow 2(r - \mu)$ and $\frac{3}{3 - \gamma\sigma^2} \rightarrow 1$. Consequently, the Black-Scholes formula is a special case of the generalized version (NIG-C) (\ref{eq:NIGC}) when $\gamma \rightarrow 0$ because
    \begin{align*}
    \mu_1 &= \mu + \sqrt{\frac{3}{3 - \gamma\sigma^2}}\tilde{\sigma}^2 \rightarrow \mu + 2(r - \mu),\\
    \tilde{\sigma}_1^2 &= \left(\sqrt{\frac{3}{3 - \gamma\sigma^2}}
    \right)^3\tilde{\sigma}^2 \rightarrow 2(r - \mu).\\
%    \tilde{\sigma}^2 &= 2(r - \mu) - \frac{\gamma}{3}(r - \mu)^2 \rightarrow 2(r - \mu).
    \end{align*}
And let $2(r - \mu) = \sigma^2$, which is a constant as in the Black-Scholes model (Recall that under the risk-neutral measure $\mathbb{Q}$, the mean $\mu = r - \frac{\sigma^2}{2}$ and the volatility $\sigma$ is a constant), then by using these results and some simple manipulations, it is not hard to see that the generalized formula (NIG-C) (\ref{eq:NIGC}) is the exact Black-Scholes formula.

In the discrete time case, the modified Black-Scholes formula for log-symmetric NIG model (NIG-D) is given by
%    \begin{align}
%    C &\approx S_0\Phi\left(\frac{\ln\big(\frac{S_0}{K}\big) + \bigg(r + \zeta\Big(1 - \sqrt{1 - \frac{\sigma^2}{\zeta}}\Big)\bigg)N}{\sigma \sqrt{N}} \right) \nonumber\\
%     & \hspace{10mm} - e^{-rN}K\Phi\left(\frac{\ln\big(\frac{S_0}{K}\big) + \bigg(r - \zeta\Big(1 - \sqrt{1 - \frac{\sigma^2}{\zeta}}\Big)\bigg)N}{\sigma \sqrt{N}} \right). \label{eq:NIGD}
%    \end{align}
    \begin{align}
    C_0 &\approx S_0\Phi\left(\frac{\ln\big(\frac{S_0}{K}\big) + \bigg(r + \frac{3}{\gamma}\Big(1 - \sqrt{1 - \frac{\gamma\sigma^2}{3}}\Big)\bigg)N}{\sigma \sqrt{N}} \right) \nonumber\\
     & \hspace{10mm} - e^{-rN}K\Phi\left(\frac{\ln\big(\frac{S_0}{K}\big) + \bigg(r - \frac{3}{\gamma}\Big(1 - \sqrt{1 - \frac{\gamma\sigma^2}{3}}\Big)\bigg)N}{\sigma \sqrt{N}} \right). \label{eq:NIGD}
    \end{align}
It can be seen that the Black-Scholes formula is a limit of the generalized version (NIG-D) (\ref{eq:NIGD}) for every $N$ when $\gamma \rightarrow 0$ due to
    \[
    \frac{3}{\gamma}\left(1 - \sqrt{1 - \frac{\gamma\sigma^2}{3}}\right) \rightarrow \frac{\sigma^2}{2}.
    \]

An example of the option price formulae plotted against the expiration time $T$ using similar set of parameter values as in the log-symmetric VG model is given below (see Figure \ref{FigNIG}). Again, it is evident that the distinction between the modified Black-Scholes formulae (NIG-C and NIG-D) and BS is noticeable even for this moderate values of $\gamma$ (see Figure \ref{Fig1}). As in the previous model, the disagreement between NIG-C and BS formulae is greater than the disagreement between NIG-D and BS. The exact prices and percentage differences are represented in Table \ref{tab:tabNIG1}.

\begin{figure}[!htb]
    \centering
    \includegraphics [bb=70 90 715 507, scale=0.5]{graph2_NIG.eps}
    \caption{Option prices and percentage differences obtained by NIG-C, NIG-D and BS formulae for log-NIG distribution weekly returns, $S_0 = K = 10$, $r = 0.06$, $\sigma = 0.19$, $\mu = 0.03$, $\gamma = 4$}
    \label{FigNIG}
\end{figure}

\begin{table}[!htb]
\centering
\begin{tabular}{|l|c|c|c|c|c|c|}
  \hline
  % after \\: \hline or \cline{col1-col2} \cline{col3-col4} ...

  \begin{minipage}{3cm}
  Time to maturity (weeks)
  \end{minipage}
  & 2 & 12 & 22 & 32 & 42 & 52 \\
  \hline \hline
  BS formula & 0.160 & 0.434 & 0.622 & 0.782 & 0.927 & 1.062 \\
  \hline
  NIG-D formula & 0.162 & 0.439 & 0.628 & 0.789 & 0.935 & 1.071 \\
  Percentage difference & 1.14 & 1.01 & 0.94 & 0.89 & 0.85 & 0.81 \\
  \hline
  NIG-C formula & 0.195 & 0.519 & 0.735 & 0.917 & 1.079 & 1.229 \\
  Percentage difference & 21.91 & 19.52 & 18.18 & 17.18 & 16.36 & 15.66\\
  \hline
\end{tabular}
\caption{Option prices and percentage differences obtained by NIG-C, NIG-D and BS formulae for log-NIG distribution weekly returns, $S_0 = K = 10$, $r = 0.06$, $\sigma = 0.19$, $\mu = 0.03$, $\gamma = 4$}
\label{tab:tabNIG1}
\end{table}

\newpage

\section{Appendix}

\begin{lemma}
\label{TransformIdentity}
Let $\nu$ and $\tilde{\nu}$ be two measure on $(0,+\infty)$ with finite second moments:
$$\kappa=\int_0^{\infty} y^2 \nu(dy)<+\infty\mbox{ and }\tilde{\kappa}=\int_0^{\infty} y^2 \tilde{\nu}(dy)<+\infty.$$
Let $\beta = \sqrt{\tilde{\kappa}/\kappa}$. If
\begin{equation}
\int_0^{\infty} \big(1 - \cos(\omega y)\big) \tilde{\nu}(dy) =
\int_0^{\infty} \big(1 - \cos(\beta\omega y)\big) \nu(dy),\quad \forall \omega > 0,\label{eq:NuTildeNu}
\end{equation}
then $\tilde{\nu}=\nu_\beta$, where $\nu_\beta$ is defined by $\int g(y)\nu_\beta(dy)=\int g(\beta y)\nu(dy)$.
\end{lemma}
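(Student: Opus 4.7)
The plan is to reduce the identity \eqref{eq:NuTildeNu} to an equality of cosine transforms of finite measures, and then invoke uniqueness of the Fourier transform. First, I would apply the change of variable in the right-hand side of \eqref{eq:NuTildeNu}: by definition, $\int g(\beta y)\nu(dy) = \int g(y)\nu_\beta(dy)$, so the hypothesis rewrites as
$$\int_0^\infty (1-\cos(\omega y))\tilde{\nu}(dy) = \int_0^\infty (1-\cos(\omega y))\nu_\beta(dy),\quad \forall\omega>0.$$
A direct computation gives $\int_0^\infty y^2 \nu_\beta(dy) = \beta^2\kappa = \tilde{\kappa}$, so $\tilde{\nu}$ and $\nu_\beta$ have the same finite second moment on $(0,\infty)$.

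Next, I would pass to the associated \emph{finite} measures $\tilde{m}(dy) = y^2\tilde{\nu}(dy)$ and $m_\beta(dy) = y^2\nu_\beta(dy)$, each of total mass $\tilde{\kappa}$. Writing the integrand as $(1-\cos(\omega y))/y^2$, the displayed identity becomes
$$\int_0^\infty \frac{1-\cos(\omega y)}{y^2}\,\tilde{m}(dy) = \int_0^\infty \frac{1-\cos(\omega y)}{y^2}\, m_\beta(dy),\quad\forall\omega>0.$$
Since $\tilde{m}$ and $m_\beta$ are finite and the second $\omega$-derivative of the integrand equals $\cos(\omega y)$ with $|\cos(\omega y)|\le 1$, dominated convergence justifies differentiating both sides twice in $\omega$; the first derivative is handled similarly using $|\sin(\omega y)/y|\le |\omega|\wedge y^{-1}$, which is bounded on compact $\omega$-neighbourhoods. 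This produces the cosine-transform identity
$$\int_0^\infty \cos(\omega y)\,\tilde{m}(dy) = \int_0^\infty \cos(\omega y)\,m_\beta(dy),\quad\forall\omega\in\mathbb{R}.$$

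Finally, extending $\tilde{m}$ and $m_\beta$ to symmetric finite measures on $\mathbb{R}$ by reflection turns the left/right members into the Fourier transforms of two symmetric finite measures on $\mathbb{R}$. Uniqueness of the Fourier transform forces these symmetrized measures to coincide, so $\tilde{m} = m_\beta$ on $(0,\infty)$; since $y^2>0$ there, this gives $\tilde{\nu} = \nu_\beta$. The only subtle step I expect is the justification of differentiating under the integral sign, but once one has passed from the Lévy-type measures $\tilde{\nu},\nu_\beta$ (which may blow up at $0$) to the finite measures $\tilde{m},m_\beta$, the estimates become routine.
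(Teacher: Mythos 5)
Your proof is correct, but it takes a genuinely different route from the paper's. Both arguments begin the same way: you each trade the L\'evy-type measures $\tilde{\nu},\nu_\beta$ (which may be infinite near the origin) for the finite measures $y^2\tilde{\nu}(dy)$ and $y^2\nu_\beta(dy)$ of common total mass $\tilde{\kappa}$, and then reduce the problem to an injectivity statement for an integral transform of these finite measures. The difference is which transform. The paper takes the Laplace transform in $\omega$ of both sides of \eqref{eq:NuTildeNu}, using $\int_0^\infty e^{-\lambda\omega}(1-\cos(\omega y))\,d\omega = y^2/(\lambda(\lambda^2+y^2))$, which after Fubini yields $\mathbb{E}\left[(\lambda^2+X^2)^{-1}\right]=\mathbb{E}\bigl[(\lambda^2+\widetilde{X}^2)^{-1}\bigr]$ for all $\lambda>0$ and then appeals to uniqueness of the Mellin (Stieltjes-type) transform of $X^2$ and $\widetilde{X}^2$. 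You instead differentiate the identity twice in $\omega$ --- legitimately, since after tilting by $y^2$ the derivatives $\sin(\omega y)/y$ and $\cos(\omega y)$ are dominated by constants on compact $\omega$-sets and the measures are finite --- to land directly on equality of the cosine transforms, extend to all of $\mathbb{R}$ by evenness (the value at $\omega=0$ being the common mass $\tilde{\kappa}$), symmetrize, and invoke uniqueness of the Fourier transform of finite measures. Your route buys a more standard and self-contained uniqueness step: Fourier inversion for finite measures is textbook, whereas the paper's appeal to Mellin-transform uniqueness of $\mathbb{E}[(\lambda^2+X^2)^{-1}]$ is stated rather tersely. The paper's route avoids any differentiation under the integral sign at the cost of an extra Fubini justification and the less elementary injectivity claim. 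Both are sound.
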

\begin{proof}
To show that the two measures are identical we show that the Mellin
transforms of certain associated probability distributions are the same.

First we compute the Laplace transforms of the two sides of \eqref{eq:NuTildeNu}:
\begin{eqnarray*}
\lefteqn{\int_0^{\infty} e^{-\lambda\omega}\bigg(\int_0^{\infty} \big(1 - \cos(\beta\omega y)\big) \nu (dy)\bigg) d\omega} \\
& = & \int_0^{\infty}\bigg(\int_0^{\infty} e^{-\lambda\omega}\big(1 - \cos(\beta\omega y)\big)d\omega \bigg)\nu(dy)\ =\ \int_0^{\infty}\bigg(\frac{\beta^2y^2}{\lambda(\lambda^2 + \beta^2y^2)} \bigg)\nu(dy),
\end{eqnarray*}
and similarly for $\tilde{\nu}$. Here we have used the identity
$$\int_0^{\infty} e^{-\lambda\omega}\big(1 - \cos(\omega
y)\big)d\omega=\frac{y^2}{\lambda(\lambda^2 + y^2)}.$$
\eqref{eq:NuTildeNu} becomes
        \begin{equation}
        \int_0^{\infty}\frac{y^2}{\lambda^2 + y^2} \tilde{\nu}(dy) =
        \int_0^{\infty}\frac{\beta^2y^2}{\lambda^2 + \beta^2y^2} \nu(dy), \;\;\forall \lambda > 0. \label{eq:AfterFubini2}
        \end{equation}
Consider now the probability measures with support on $(0,+\infty)$, $\tilde{n}(dy)=\frac1{\tilde{\kappa}}y^2\tilde{\nu}(dy)$
and $n(dy)=\frac1{\tilde{\kappa}}y^2\nu_\beta(dy)$.
Then the positive random variables $X$ and $\widetilde{X}$ with respective distributions
$n$ and $\tilde n$ satisfy, for all $\lambda>0$,
$$\mathbb{E}\left[\frac{1}{\lambda^2+X^2}\right]=\mathbb{E}\left[\frac{1}{\lambda^2+\widetilde{X}^2}\right].$$
In other words, the Mellin transforms of $X^2$ and $\widetilde{X}^2$are equal, which in turn implies that the laws
of $X$ and $\widetilde X$ are the same, that $\tilde{n}=n$, and consequently that
$\tilde{\nu}=\nu_\beta$.
\end{proof}

\begin{lemma}
    \label{DominatedConvergence}

    \[
    \lim_{v\to\infty}\int_0^{\infty} \frac{1}{v}\Big[ \big(1 - \cos(y\sqrt{2v}/\tilde{\sigma})\big)e^{\phi(y)} -
    \big(1 - \cos(y\sqrt{2v}/\sigma)\big) \Big] \nu(dy)=0.
    \]

\end{lemma}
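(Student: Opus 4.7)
The plan is to apply the dominated convergence theorem to the integrand
\[
g_v(y)=\frac{1}{v}\Big[\big(1-\cos(y\sqrt{2v}/\tilde\sigma)\big)e^{\phi(y)}-\big(1-\cos(y\sqrt{2v}/\sigma)\big)\Big].
\]
First I would establish pointwise convergence: for each fixed $y>0$ (with $\phi(y)<\infty$, which holds $\nu$-a.e.\ since the condition on $\phi$ in Theorem \ref{ChangeMeasure} forces $e^{\phi/2}-1\in L^2(\nu)$), both cosines are bounded by $1$ in absolute value, so $|g_v(y)|\le (2+2e^{\phi(y)})/v\to 0$ as $v\to\infty$.

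Next, I would produce a $v$-independent, $\nu$-integrable dominant. The elementary inequality $1-\cos(x)\le x^2/2$ gives
\[
\frac{1}{v}\big(1-\cos(y\sqrt{2v}/\sigma)\big)\le \frac{1}{v}\cdot\frac{y^2\cdot 2v}{2\sigma^2}=\frac{y^2}{\sigma^2},
\]
and likewise $v^{-1}(1-\cos(y\sqrt{2v}/\tilde\sigma))e^{\phi(y)}\le y^2 e^{\phi(y)}/\tilde\sigma^2$. Therefore
\[
|g_v(y)|\le \frac{y^2}{\sigma^2}+\frac{y^2 e^{\phi(y)}}{\tilde\sigma^2}
\]
for all $v>0$. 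By \eqref{eq:VarianceofLevy} and \eqref{eq:QSymVariance} (and the standing finite-variance assumption), $\int_0^\infty y^2\,\nu(dy)\le \sigma^2<\infty$ and $\int_0^\infty y^2 e^{\phi(y)}\,\nu(dy)\le\tilde\sigma^2<\infty$, so the dominant is in $L^1(\nu)$.

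Dominated convergence then yields $\int_0^\infty g_v(y)\,\nu(dy)\to 0$, as required. There is no real obstacle here; the only minor point to verify is the almost everywhere finiteness of $e^{\phi(y)}$ and the use of the finite-variance hypothesis to ensure integrability of the dominant on a neighbourhood of the origin (where $\nu$ may be infinite but $y^2\nu(dy)$ is finite).
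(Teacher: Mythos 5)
Your proposal is correct and follows essentially the same route as the paper: pointwise convergence of the integrand, the bound $1-\cos(x)\le x^2/2$ yielding the $v$-independent dominant $y^2 e^{\phi(y)}/\tilde{\sigma}^2 + y^2/\sigma^2$, integrability of that dominant from the finite-variance assumption, and then dominated convergence. The only cosmetic difference is that you cite \eqref{eq:VarianceofLevy} and \eqref{eq:QSymVariance} directly for integrability where the paper splits the check into the regions $|y|\le 1$ and $|y|>1$; both amount to the same thing.
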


\begin{proof}
Let
$$f_v(y) = \frac{1}{v}\Big[ \big(1 - \cos(y\sqrt{2v}/\tilde{\sigma})\big)e^{\phi(y)} - \big(1 - \cos(y\sqrt{2v}/\sigma)\big) \Big].$$
Clearly, for any fixed $y$, $\lim_{v \rightarrow \infty} f_v(y) = 0$.
Using the inequality $1 - \cos(x)\le x^2/2$ we
obtain
$$|f_v(y)|\leq\frac{y^2}{\tilde{\sigma}^2}e^{\phi(y)} + \frac{y^2}{\sigma^2} = G(y).$$
$G(y)$ is integrable with respect to $\nu$, since the L\'evy measures
$\tilde{\nu}$ and $\nu$ satisfy
    \[
    \int_{\mathbb{R}}(1 \wedge y^2) \tilde{\nu}(dy) < \infty, \hspace{8mm} \mbox{and} \hspace{8mm} \int_{\mathbb{R}}(1 \wedge y^2) \nu(dy) < \infty;
    \]
and   existence of variance implies
    \[
    \int_{|y| > 1} y^2 \tilde{\nu}(dy) < \infty, \hspace{8mm} \mbox{and} \hspace{8mm} \int_{|y| > 1} y^2 \nu(dy) <
    \infty.
    \]
The result follows by dominated convergence.

\end{proof}

\section*{Acknowledgements}
This research was supported by the
Australian Research Council grant DP0988483.

%%%%%%%%%%%%%%%%%%%%%%%    Appendix    %%%%%%%%%%%%%%%%%%%%%%%%%

%%%%%%%%%%%%%%%%%%%%%%%%%%%%%%%%%%%%%%%%%%%%%%%%%%%%%%%%%%%%%%%%%%%%%%%%%%%%%%%%

\end{document}